\title{Percolation of Lipschitz surface and tight bounds on the spread of information among mobile agents}
\titlerunning{Percolation of Lipschitz surface and the spread of information among mobile agents}%optional, please use if title is longer than one line
\author{Peter Gracar}{Mathematical Institute, University of Cologne, Weyertal 86-90, 50931 Köln, Germany}{pgracar@math.uni-koeln.de}{}{}%mandatory, please use full name; only 1 author per \author macro; first two parameters are mandatory, other parameters can be empty.
\author{Alexandre Stauffer}{Department of Mathematical Sciences, University of Bath, Claverton Down, Bath, BA2 7AY, United Kingdom}{a.stauffer@bath.ac.uk}{}{Supported by a Marie Curie Career Integration Grant PCIG13-GA-2013-618588 DSRELIS, and an EPSRC Early Career Fellowship.}
\authorrunning{P. Gracar and A. Stauffer}%mandatory. First: Use abbreviated first/middle names. Second (only in severe cases): Use first author plus 'et al.'
\subjclass{Mathematics of computing $\rightarrow$ Probability and statistics}% mandatory: Please choose ACM 2012 classifications from https://www.acm.org/publications/class-2012 or https://dl.acm.org/ccs/ccs_flat.cfm . E.g., cite as "General and reference $\rightarrow$ General literature" or \ccsdesc[100]{General and reference~General literature}. 
\keywords{Lipschitz surface, spread of information, flooding time, moving agents}%mandatory
\theoremstyle{plain} %theorems
\newtheorem{thrm}{Theorem}%[section]
\newtheorem{claim}{Claim}%[section]
\theoremstyle{definition} %definitions
\newtheorem{mydef}{Definition}%[section]
\begin{document}

\maketitle

\begin{abstract}
We consider the problem of spread of information among mobile agents on the torus. The agents are initially distributed as a Poisson point process on the torus, and move as independent simple 
	random walks. Two agents can share information whenever they are at the same vertex of the torus. 
	We study the so-called flooding time: the amount of time it takes for information to be known by all agents. We establish a tight upper bound on the flooding time, and introduce a technique 
	which we believe can be applicable to analyze other processes involving mobile agents.
 \end{abstract}

\section{Introduction}

We consider the problem of spread of information between mobile agents on a $d$-dimensional torus of side-length $n$.
We will denote by \(N=n^d\) the number of vertices on the torus, and will refer to the agents as \emph{particles}. 
At time 0, the particles are distributed on the vertices of the torus as a Poisson point process of intensity \(\lambda\). 
Then, particles move by performing independent continuous-time simple random walks on the torus; 
that is, at rate $1$ a particle chooses a neighboring vertex uniformly at random and jumps there.
It is not difficult to check that this system of particles is in stationarity. Thus, at any given time $t$, 
the location of the particles is a Poisson point process of intensity $\lambda$ 
on the torus. 
However, the configuration of particles at time $t$ is not independent of the configuration of particles at time $0$, 
and as we will explain below, it is this dependence that makes this model challenging to analyze. 

Assume that at time \(0\) there is a particle at the origin with a piece of information that has to be distributed to all other particles.
Then, any uninformed particle (a particle that does not know the information) receives the information whenever it is at the same vertex as an informed particle 
(a particle that knows the information). 
We study the time it takes the information to reach all the particles, which is commonly referred to as the \emph{flooding time}.
 
A big challenge in analyzing this model is due to the heavily dependent structure of the particles. 
In fact, though particles move independently of one another, dependences do arise over time. 
For example, if a ball of radius \(R\) centered at some vertex \(x\) of the torus turns out to have no particles at time 0, then the ball \(B(x,R/2)\) of radius \(R/2\) centered at \(x\) 
will continue to be empty of particles up to time \(R^2\), with positive probability. 
This means that the probability that the \((d+1)\)-dimensional, space-time cylinder \(B(x,R/2)\times[0,R^2]\) has no particle is at least \(\exp\{-cR^d\}\) for some constant \(c\).
This is just a stretched exponential on the volume of the cylinder, which prevents us from applying classical methods based on comparison with independent percolation~\cite{LSS}, since those require exponential 
decay of correlations. 
In addition to this, whenever one finds such a ball of radius $R$ empty of particles at time $0$, this affects regions of the torus in the vicinity of this ball. In particular, 
during a time interval of length $R^2$, the density of particle in the vicinity of the ball will be smaller than the expected density $\lambda$. 
In this work we develop a framework to control such dependences. 

When the transmission radius is large (in the sense that information can be transmitted between particles at distance \(O( \log^{1/d}(n))\) of each other)
or the jump range is large (in the sense that a particle can 
jump a distance of order $O(\log^{1/d} n)$ in one step), then the dependences can be more easily controlled. 
These cases where analyzed in~\cite{Clementi2011,Clementi2013},
where tight bounds on the flooding time (up to constant factors) were established\footnote{In fact, \cite{Clementi2011} studies the flooding time for a larger class of dynamic graphs. However, due to space limitations, we 
restrict our discussion to results on the specific model of spread of information among random walk particles.}. 
Having a large transmission radius or jump range helps the analysis because of the following. 
Tessellate the torus into boxes of side-length \(\Theta(\log^{1/d}(n))\), and tessellate time into intervals of constant length. 
Then, since the system of particles is in stationary and boxes are so large, 
we can apply a Chernoff bound for Poisson random variables 
to show that, for any given 
box and time interval, with probability $1-n^{-C}$, there is a large enough number of 
particles inside the box during that time interval (when this happen, call the cell of the tessellation \emph{good}). 
Then a union bound can be used to show that all cells of the tessellation are good.
Then, if the transmission radius is large enough to allow particles from neighboring boxes to exchange information, 
one can establish a tight bound on the flooding time. 
If it is the jump range that is large enough, then one can use the fact that, after a time interval of order 1, 
the configuration of particles inside any given box is close to stationarity. In other words, the system of particles has a small mixing time. 
This washes away the dependences of the system, and allowed a tight bound (up to constant factors) to be derived. 

An important open problem has been to analyze the case where both the transmission radius and the jump range are of order 1, which is our setting here. 
This was studied in \cite{Lam2011}, 
where it was shown that, with high probability, 
the flooding time is at most \(\tilde\Theta(n)\), 
where the notation $\tilde\Theta(\cdot)$ means that poly-logarithmic factors are neglected\footnote{We remark that \cite{Lam2011} considers also the case where the number of particles can be of order much smaller than $N$, 
and~\cite{PPPU,Peres2012} analyze a variant of this model, but these settings are out of the scope of this work.}. 
This bound is tight up to poly-logarithmic factors since, for a transmission radius and jump range of order 1, the flooding time is $\Omega(n)$ in all dimensions.
We note that, when neglecting poly-logarithmic factors, one can still work with the above tessellation --- of cells of side-length \(\Theta(\log^{1/d}(n))\) --- 
for which all cells of the tessellation are good. This is because one can do some suboptimal estimates to allow information to spread inside a cell (thereby 
losing only a poly-logarithmic factor), and then use the fact 
that cells are good, and full of particles, to let the information spread from one cell to the next.
Getting a bound that is tight up to \emph{constant} factors, on the other hand, involves a rather delicate issue, 
since one is forced to consider tessellations of \emph{constant} side-length, which will naturally
contain a positive density of bad cells, forcing a more careful control of the dependences of the system.

Turning back to our setting, where particles can jump only across neighboring vertices and information can be transmitted only between particles located at the same vertex, 
\cite{Kesten2005} analyzes the process in the whole of $\mathbb{Z}^d$ and shows that the information spreads with positive speed. 
To prove this, the authors developed a complicated multi-scale framework to control the dependences of the system, 
where tessellations of different side-lengths were considered and controlled. 
This multi-scale technique is quite powerful, and has been employed in the mathematics literature to solve other processes with slow decay of 
correlations~\cite{Sidoravicius2009,Sznitman2012,Candellero2015}. 
However, this technique is usually very difficult to implement, 
and has to be tailored to each specific model and question being studied.
The goal of our work is to develop a robust and flexible multi-scale framework that can be more easily applied to answer questions involving systems of random walk particles, and we illustrate its usefulness by deriving 
tight bounds on the flooding time.

\subsection{Our results}
We start considering a more general setup.
Let \(\mathbb{T}^d\) be the \(d\)-dimensional integer torus of side length \(n\).
% , i.e. a collection of sites in \(\mathbb{Z}^d\), where we identify sites \(x,y\in\mathbb{Z}^d\) if and only if \(x_i\equiv y_i \mod n\) for all \(i\in\{1,\dots,d\}\). 
Let \(G=(\mathbb{T}^d,E)\) be the nearest neighbor graph on \(\mathbb{T}^d\). Let \(\{\mu_{x,y}\}_{(x,y)\in E}\) be a collection of i.i.d.\ symmetric weights, which we call \emph{conductances}. 
We assume that the conductances are \emph{uniformly elliptic}; that is, 
\begin{equation}\label{eq:mu_bounds_new}
	\textrm{there exists a constant \(C_M>0\), such that }\mu_{x,y}\in[C_M^{-1},C_M]\textrm{ for all }(x,y)\in E.
\end{equation}
We say \(x\sim y\) if \((x,y)\in E\) and define \(\mu_x=\sum_{y\sim x}\mu_{x,y}\). At time \(0\), consider a Poisson point process of particles on \(\mathbb{T}^d\), 
with intensity measure \(\lambda(x)=\lambda_0\mu_x\) for some constant \(\lambda_0>0\) and all \(x\in\mathbb{T}^d\). 
That is, for each \(x\in\mathbb{T}^d\), the number of particles at \(x\) at time \(0\) is an independent Poisson random variable of mean \(\lambda_0\mu_x\). 
Then, let the particles perform independent continuous-time simple random walks on the weighted graph so that a particle at \(x\in\mathbb{T}^d\) jumps to a neighbor \(y\sim x\) at rate \(\frac{\mu_{x,y}}{\mu_x}\). 
It follows from the thinning property of Poisson random variables that the system of particles is in stationarity.
% ; therefore, at any time \(t\), the particles are distributed according to a Poisson point process with intensity measure \(\lambda\).

Assume that at time \(0\) there is an informed particle at the origin, and all other particles are uninformed. 
One of the main results of this paper is the following. 
\begin{thrm}\label{thrm:total}
   If \(d\geq 2\) and the conductances satisfy (\ref{eq:mu_bounds_new}), then with probability $1-n^{-\omega(1)}$ the flooding time is $\Theta(n)$.
\end{thrm}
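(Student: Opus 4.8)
The plan is to prove the two bounds $\Omega(n)$ and $O(n)$ separately; the lower bound is the easy half. Since the conductances satisfy (\ref{eq:mu_bounds_new}), a single continuous-time random walk stays within distance $\varepsilon n$ of its starting point up to time $\varepsilon' n$ with probability $1-n^{-\omega(1)}$ for suitable small constants $\varepsilon,\varepsilon'$. The set of informed particles at time $t$ is contained in the union of the trajectories of the particles that have been informed, and each such trajectory is a concatenation of independent random-walk pieces whose total displacement must reach the target. A first-moment bound over these transmission chains, after discretising time, shows that with probability $1-n^{-\omega(1)}$ no vertex at distance $\geq\varepsilon n$ from the origin is informed before time $\varepsilon' n$; as the torus has diameter $\Theta(n)$, the flooding time is $\Omega(n)$. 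The content of Theorem~\ref{thrm:total} is therefore the upper bound.

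For the upper bound I would build a multi-scale renormalisation of space-time in the spirit of \cite{Kesten2005}, organised so that its geometry is that of a percolating \emph{Lipschitz surface} of good cells. Fix a base length scale $L_0$, a large constant, and a fast-growing sequence $L_0<L_1<L_2<\cdots$, and tessellate $\mathbb T^d\times[0,\infty)$ at scale $k$ into space-time cells of spatial side $L_k$ and temporal length comparable to the time $\Theta(L_k^2)$ it takes information to cross such a cell. At scale $0$, call a cell \emph{good} if the particle trajectories inside it resemble the stationary Poisson field: a number of particles of order $L_0^d$ is present throughout, and the local relaying of information is efficient enough that information entering the cell anywhere at the start of its time interval is, with conditional probability at least $\tfrac12$, carried to a fixed positive fraction of its vertices by the end. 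Taking $L_0$ large makes the probability $p_0$ that a cell is bad a small constant. The obstruction flagged in the introduction enters precisely here: an empty ball of radius $L_0$ at time $0$ stays depleted for a \emph{diffusive} time $\sim L_0^2$, so emptiness of a space-time cell is an event of probability only $\exp\{-cL_0^d\}$ whose influence persists far longer than the cell itself and spreads to neighbouring cells. This mismatch between the linear speed of information and the quadratic relaxation time of a depleted region is the source of the long-range correlations that rule out a comparison with independent percolation~\cite{LSS}, and the renormalisation is the device built to absorb it.

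Goodness at scale $k$ is then defined recursively: a scale-$k$ cell is good if all but boundedly many of its scale-$(k-1)$ subcells are good and, crucially, the bad subcells are confined to a subregion that does not stretch across the cell in the time direction. This second requirement is what produces the Lipschitz geometry: inside a good scale-$k$ cell the bad part is ``flat'', so a route carrying information can always step forward in time into a neighbouring good region while drifting spatially by a bounded amount, circumventing any bad cluster at scale $k$ with a detour costing only $O(L_k^2)$ extra time; since the $L_k$ grow fast, the series $\sum_k p_k L_k$ bounding the accumulated overhead along a path of spatial length $n$ converges (dominated by the $k=0$ term $p_0L_0=O(1)$), so information fills the torus in time $O(n)$. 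A scale-$k$ cell can be bad only if several scale-$(k-1)$ cells in a connected pattern are bad, and summing over such patterns, using the near-independence of well-separated subcells, yields a recursion of the form $p_k\le(\text{poly }L_k)\,p_{k-1}^{2}$ that drives $p_k$ to zero super-polynomially. Choosing the top scale to be a small power of $n$ and taking a union bound over the polynomially many space-time cells in a final time-slab then shows that, with probability $1-n^{-\omega(1)}$, every cell there contains informed particles throughout; since almost every cell is good and such cells relay information efficiently, every particle meets an informed particle during that slab, hence is informed by time $O(n)$.

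I expect the main obstacle to be designing the recursive notion of goodness so that three competing demands hold at once: (i) the event ``scale-$k$ cell is bad'' must have a small enough range of dependence for the recursion $p_k\le(\text{poly }L_k)p_{k-1}^2$ to close despite the stretched-exponential correlations of the base events \emph{and} the fact that a single particle contributes to many cells, so that ``well-separated'' cells are only approximately independent; (ii) goodness at scale $k$ must genuinely imply the geometric statement that the bad part is flat, which is what makes the bounded-slope detours --- the ``percolation of Lipschitz surface'' of the title --- possible; and (iii) goodness must imply, at every scale, an information-propagation statement robust enough to survive conditioning on the environment outside the cell, so that the $\tfrac12$-success probabilities can be amplified by independence along a route. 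Calibrating the quantitative meaning of ``does not stretch across the cell'' so that the combinatorics of bad patterns, the geometry of detours, and the coupling of the random-walk dynamics with the static good/bad field are all simultaneously under control is the technical heart of the argument.
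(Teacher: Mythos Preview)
Your high-level strategy --- control the slow decorrelation via a Lipschitz-type percolation structure --- is in the right spirit, but your concrete mechanism differs from the paper's in two important ways. First, you propose a genuine \emph{hierarchy} of scales $L_0<L_1<\cdots$ with a recursive definition of goodness and a recursion $p_k\le(\text{poly }L_k)p_{k-1}^2$, in direct Kesten--Sidoravicius style. The paper instead works at a \emph{single} constant scale $\ell$: the multi-scale analysis is entirely encapsulated in the black-box Theorem~\ref{thrm:surface_event_simple} (proved in the companion paper), which says that if a single local increasing event $E_{\mathrm{st}}$ is likely enough, a two-sided Lipschitz surface of good cells exists. Second, and more structurally, your Lipschitz geometry singles out \emph{time} as the special direction; the paper explicitly warns against this and takes a \emph{spatial} coordinate as the height. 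With spatial height, the surface is a graph over the remaining spatial coordinates \emph{and time}, so it extends through all time and carries information forward; a single such surface fills one spatial hyperplane, and the paper then interlaces $\Theta(n/\log^3 n)$ translates of these surfaces along each of the $d$ spatial axes into a \emph{Lipschitz net} (Theorem~\ref{thrm:net}, Lemma~\ref{lem:change_surface}) to get a density of informed particles everywhere (Theorem~\ref{prop:phase1density}).

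Your final step also has a gap that the paper addresses explicitly. You say ``every cell there contains informed particles throughout; since almost every cell is good and such cells relay information efficiently, every particle meets an informed particle during that slab''. But a density of informed \emph{cells} does not by itself give that every particle is hit: the target particle may sit in a bad cell, and even in a good cell the local event guarantees \emph{outgoing} spread from a designated informed particle, not that an arbitrary particle is caught. The paper closes this gap with a separate local-mixing coupling (Theorem~\ref{thrm:mixing}): once the net provides one informed particle per cube of side $\Theta(\log^3 n)$, after time $\Theta(\log^8 n)$ the informed particles stochastically dominate a Poisson field of polylogarithmic intensity, and then over $\Theta(n/\log^8 n)$ independent time windows each uninformed particle is hit with probability bounded away from $0$, yielding the $n^{-\omega(1)}$ bound. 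Your outline would need an analogous ``catch the last particle'' argument; the recursive multi-scale by itself does not supply it.
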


Another main contribution of this paper is the framework we develop to establish Theorem~\ref{thrm:total}, which we believe gives a robust and more 
easy to apply framework to address problems involving systems of moving particles. The idea is as follows.
We tessellate space and time into cells of constant length. 
Then, for each cell we are given a local event, and call the cell good if the event of that cell 
holds. 
Then, if for any given cell, we have that the probability that the cell is good is close enough to 1, then we can find a subset of good cells that form what we call 
a Lipschitz surface and a Lipschitz net. 
These Lipschitz surface and Lipschitz net have some percolative and geometric features that allow 
the good event to propagate through space and time. 
For example, for the problem of spread of information, the local event we use is to say that a given cell is good 
if the following two things happen: (i) the cell contains sufficiently many particles, and 
(ii) if there is an informed particle inside the cell, then that particle is able to inform a large number of other particles 
that will move to neighboring cells. With this definition and the existence of the Lipschitz surface and net, 
we obtain that once the information enters a cell of the Lipschitz surface, we guarantee that the information can 
propagate throughout the surface, from one cell of the surface to the next.
We believe our approach is flexible enough to allow other processes on moving particles to be analyzed. The main task reduces to defining a suitable local event. 

Since this framework is quite involved, we will give its construction and all main technical theorems in Section~\ref{sec:lipschitz}. 
Then, in Section~\ref{sec:spread}, we use this framework to analyze the spread of information.
Due to space limitations, we will not be able to give full proofs of the above framework, for which we refer to the full version~\cite{Gracar2016a}.
This extended abstract has yet one additional result with respect to~\cite{Gracar2016a}, which is the construction and proof of the Lipschitz net, which is adapted to analyzing processes on finite graphs.

\section{Lipschitz net framework}\label{sec:lipschitz}

For the remainder of this paper, we assume \(d\geq 2\). Fix \(\ell>0\) and tessellate \(\mathbb{T}^d\) into cubes of side length \(\ell\in\mathbb{R}\), indexed by \(i\in \mathbb{Z}^{d}\).
To simplify the notation, assume that $n/\ell$ is an integer.
%Due to this choice, any constants that depend on \(\ell\) also depend indirectly on \(n\). However, they can be bounded from above and below by constants that are independent of \(n\). For this reason, we will treat such constants as independent of \(n\).
Next, tessellate time into intervals of length \(\beta\), indexed by \(\tau\in\mathbb{Z}\). 
With this we denote by the \emph{space-time cell} \((i,\tau)\in\mathbb{Z}^{d+1}\) the region \(\prod_{j=1}^d[i_j\ell,(i_j+1)\ell]\times[\tau\beta,(\tau+1)\beta]\). 
In the following, $\beta$ and $\ell$ are constants such that the ratio \(\beta/\ell^2\) is fixed first to be some small value, and then later \(\ell\) is made large enough. 
%We finally assume for \(n\) that \(n\gg\ell\) and that it is large enough for all of the above to be possible.
%\subsection{Existence and geometry of the net}
%We now proceed in a similar way to \cite[Section 5]{Gracar2016}. 
% We will state a result from \cite{Gracar2016a} that gives the existence of a Lipschitz connected surface (cf. Definitions \ref{def:lip_fun} and \ref{def:lip_surf} below) that surrounds the origin and which is composed of space-time cells, for which a certain local event holds. This will allow us to obtain an infinite sequence of space-time cells, such that the infection spreads from one cell to the next. We will then show that multiple surfaces like this can be combined to create a \emph{\textrm{net}} along which the infection spreads across the torus to give a density of infected particles.
We will also need to consider overlapping space-time cells. 
Let \(\eta\geq 1\) be an integer which will represent the amount of overlap between cells. For each cube \(i=(i_1,\dots,i_d)\) and time interval \(\tau\), define the \emph{super cube} \(i\) as \(\prod_{j=1}^d[(i_j-\eta)\ell,(i_j+\eta+1)\ell]\) and the \emph{super interval} \(\tau\) as \([\tau\beta,(\tau+\eta)\beta]\). We define the \emph{super cell} \((i,\tau)\) as the Cartesian product of the super cube \(i\) and the super interval \(\tau\).

For any time $s$, let $\Pi_s$ be the set of particles at time $s$, seen as a collection of vertices of $G$ with multiplicity when there is more than one particle at a vertex. 
We say an event \(E\) is \emph{increasing} for \((\Pi_s)_{s\geq 0}\) if the fact that \(E\) holds for \((\Pi_s)_{s\geq 0}\) implies that it holds for all \((\Pi'_s)_{s\geq 0}\) for which \(\Pi_s'\supseteq \Pi_s\) for all \(s\geq 0\).
%Let \((\Pi_s)_{s\geq 0}\) be a sequence of point processes on \(\mathbb{Z}^d\), with \(\Pi_s\) representing the locations of the particles at time \(s\). We say an event \(E\) is \emph{increasing} for \((\Pi_s)_{s\geq 0}\) if the fact that \(E\) holds for \((\Pi_s)_{s\geq 0}\) implies that it holds for all \((\Pi'_s)_{s\geq 0}\) for which \(\Pi_s'\supseteq \Pi_s\) for all \(s\geq 0\). 
We need the following definitions.

\begin{mydef}\label{def:restricted}
	We say an event \(E\) is \emph{restricted} to a region \(X\subset\mathbb{T}^d\) and a time interval \([t_0,t_1]\) if it is measurable with respect to the \(\sigma\)-field generated by all the particles that are inside \(X\) at time \(t_0\) and their positions from time \(t_0\) to \(t_1\).
\end{mydef}
\begin{mydef}\label{def:displacement}
	We say a particle has displacement inside \(X'\) during a time interval \([t_0,t_0+t_1]\), if the location of the particle at all times during \([t_0,t_0+t_1]\) is inside \(x+X'\), where \(x\) is the location of the particle at time \(t_0\).
\end{mydef}

\begin{mydef}\label{def:probassoc}
	\(\nu_E\) is called the \emph{probability associated} to an increasing event \(E\) that is restricted to \(X\)  and a time interval \([0, t]\) if, 
	for an intensity measure \(\zeta\) and a region \(X'\in\mathbb{T}^d\), \(\nu_E(\zeta,X,X',t)\) is the probability that \(E\) happens given that, at time \(0\), 
	the particles in \(X\) are distributed as a Poisson point process of intensity \(\zeta\)
	%the particles in \(X\) are given by a Poisson point process of intensity measure \(\zeta\) 
	and their motions from \(0\) to \(t\) are independent continuous time random walks on the weighted graph \((G,\mu)\), where the particles are conditioned to have displacement inside \(X'\) during \([0,t]\).
\end{mydef}

For each \((i,\tau)\in\mathbb{T}^{d}\times \mathbb{Z}\), let \(E_{\mathrm{st}}(i,\tau)\) be an increasing event restricted to the super cube \(i\) and the super interval \(\tau\). Here the subscript \(\mathrm{st}\) refers to space-time. 
We say that a cell \((i,\tau)\) is \emph{bad} if \(E_{\mathrm{st}}(i,\tau)\) does not hold; otherwise, \((i,\tau)\) is called \emph{good}.

Our framework will establish that if for any given $(i,\tau)$, the event $E_{\mathrm{st}}(i,\tau)$ occurs with large enough probability, then not only do the good cells percolate but the good cells form a particularly 
useful geometry, which we will call the Lipschitz net. 

Before defining the Lipschitz net, we need to introduce a different way to index space-time cells, which we refer to as the \emph{base-height index}. 
In the base-height index, we pick one of the \(d+1\) space-time dimensions and denote it as \emph{height}, using index \(h\in\mathbb{Z}\), while the remaining \(d\) space-time dimensions will form the base, 
which will be indexed by \(b\in\mathbb{Z}^d\). 
In this way, for each space-time cell \((i,\tau)\) there will be \((b,h)\in\mathbb{Z}^{d+1}\) such that the base-height cell \((b,h)\) corresponds to the space-time cell \((i,\tau)\). With this, 
we set \(E_\mathrm{bh}(b,h)=E_\mathrm{st}(i,\tau)\). (Here the subscript \(\mathrm{bh}\) refers to base-height.)
It might be tempting to choose time as the height dimension, however it turns out that selecting one of the spatial dimensions to act as height is a better choice, as will be shown below.
With this choice, note that $b\in \mathbb{T}^{d-1}\times \mathbb{Z}$ and $h\in \mathbb{T}$; thus, for notation purpose, we define $\mathbb{T}^{d}_* = \mathbb{T}^{d-1}\times \mathbb{Z}$
and $\mathbb{T}^{d+1}_* = \mathbb{T}^{d-1}\times \mathbb{Z}\times \mathbb{T}$.

\subsection{Two-sided Lipschitz surface}
\begin{mydef}\label{def:lip_fun}
	A function \(F:\mathbb{T}^d_*\rightarrow \mathbb{T}\) is called a \emph{Lipschitz function}
	 if \(|F(x)-F(y)|\leq 1\) whenever \(\|x-y\|_1 = 1\).
\end{mydef}

\begin{mydef}\label{def:lip_surf}
	A \emph{two-sided Lipschitz surface} \(F\) is a set of base-height cells \((b,h)\in\mathbb{T}^{d+1}_*\) such that for all \(b\in\mathbb{T}^d_*\) there are exactly two (possibly equal) integer values \(F_+(b)\geq 0\) and \(F_-(b)\leq0\) for which \((b,F_+(b)),(b,F_-(b))\in F\) and, moreover, \(F_+\) and \(F_-\) are Lipschitz functions.
\end{mydef}

\begin{figure}[!h]
  \begin{center}
  	\includegraphics[width=0.8\linewidth]{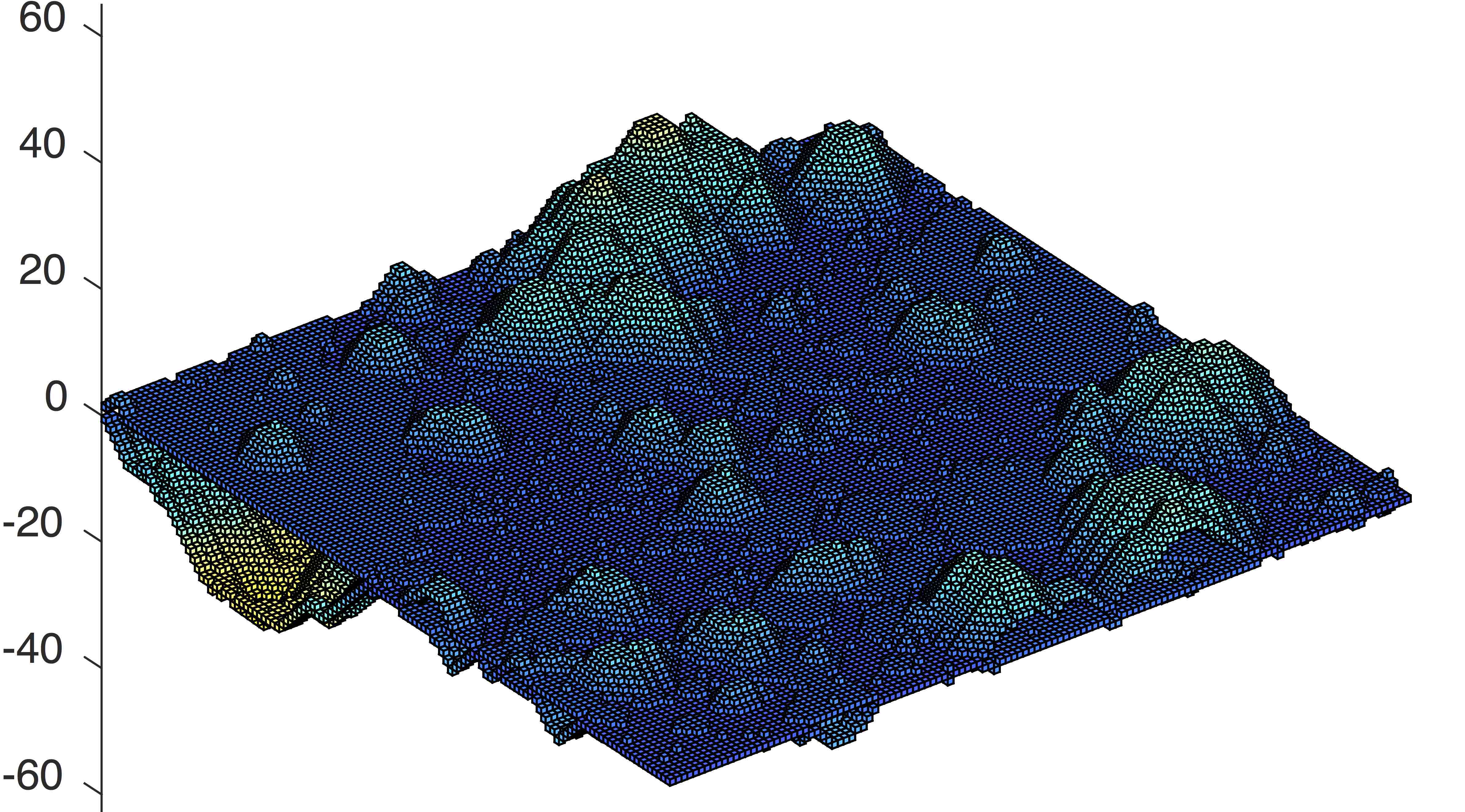}
  \end{center}
  \caption{A realization of the two-sided Lipschitz surface for the case \(d=2\).}\label{fig:surface}
\end{figure}

We say a space-time cell \((i,\tau)\) belongs to \(F\) if its corresponding base-height cell \((b,h)\) belongs to \(F\). 
% We say a two-sided Lipschitz surface \(F\) exists, if for all $b\in\mathbb{Z}^d$, we have $F_+(b)<\infty$ and $F_-(b)>-\infty$. 
For a positive integer $D$, we say a two-sided Lipschitz surface \emph{surrounds} a cell \((b',h')\) at distance \(D\) 
if any path \((b',h')=(b_0,h_0),(b_1,h_1),\dots,(b_m,h_m)\) for which \(\|(b_i,h_i)-(b_{i-1},h_{i-1})\|_1=1\) for all \(i\in\{1,\dots m\}\) and \(\|(b_m,h_m)-(b_0,h_0)\|_1>D\), intersects with \(F\).

For any \(z\in\mathbb{Z}_+\), define the cube \(Q_z=[-z/2,z/2]^d\). 
The following theorem establishes the existence of the Lipschitz surface. Due to space limitations, the proof is given in~\cite{Gracar2016a}.

\begin{thrm}\label{thrm:surface_event_simple}
   Consider the graph \((G,\mu)\) satisfying \eqref{eq:mu_bounds_new}, and the tessellation defined above. 
   There exist positive constants \(c_1, c_2, c_3, c_4\) and \(c_5\) such that, if \(\beta/\ell^2\leq c_5\), then the following holds.
	Let \(E_{\mathrm{st}}(i,\tau)\) be any increasing event restricted to the space-time super cell \((i,\tau)\).
	Fix \(\epsilon\in(0,1)\) and fix \(w\geq\sqrt{\frac{\eta\beta}{c_2\ell^2}\log\left(\frac{8c_1}{\epsilon}\right)}.\)
	Then, there exists a positive number \(\alpha_0\) that depends on \(\epsilon\), \(\eta\) and the ratio \(\beta/\ell^2\) so that if
	\begin{equation}
	\min\left\{C_{M}^{-1}\epsilon^2\lambda_0\ell^d,\log\left(\frac{1}{1-\nu_{E_{\mathrm{st}}}((1-\epsilon)\lambda,Q_{(2\eta+1)\ell},Q_{w\ell},\beta)}\right)\right\}\geq\alpha_0,
	\label{eq:lip}
	\end{equation}
	a two-sided Lipschitz surface \(F\) where \(E_{\mathrm{st}}(i,\tau)\) holds for all \((i,\tau)\in F\) exists almost surely, and 
	the probability that $F$ does not surround the origin at distance $r$ is at most 
	\begin{align*}
% 		\mathbb{P}&[F\textrm{ does not surround the origin at distance }r]
% 		&\leq\left\{
		\begin{array}{ll}
			\sum_{s\geq r}s^d\exp\left\{-c_3\lambda_0\frac{\ell s}{\log^{c_4}(\ell s)}\right\},&\textrm{for }d=2\\
			\sum_{s\geq r}s^d\exp\left\{-c_3\lambda_0\ell s\right\},&\textrm{for }d\geq 3.
		\end{array}
% 		\right.&
	\end{align*}
\end{thrm}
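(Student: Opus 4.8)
The plan is the standard multiscale percolation programme, carried out in four stages: (1) replace the heavily dependent family $\{E_{\mathrm{st}}(i,\tau)\}$ by a finite-range dependent surrogate obtained by thinning and spatially confining the particles; (2) dominate this surrogate from below by i.i.d.\ site percolation with small defect density via the Liggett--Schonmann--Stacey comparison~\cite{LSS}; (3) build the two Lipschitz graphs $F_+\ge 0$ and $F_-\le 0$ by a ``lowest admissible surface'' exploration of the good cells; and (4) bound the probability of not surrounding the origin by a Peierls-type argument.

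\emph{Stage 1 (decoupling).} The obstruction stressed in the introduction is that $E_{\mathrm{st}}(i,\tau)$ and $E_{\mathrm{st}}(i',\tau')$ stay dependent even for far-apart cells, since one particle may travel arbitrarily far. I would neutralise this by $\epsilon$-thinning the initial Poisson process, lowering its intensity from $\lambda$ to $(1-\epsilon)\lambda$, and then, in the thinned process, conditioning every particle to have displacement inside $Q_{w\ell}$ over each super interval. A confined particle starting in super cube $i$ never leaves an $O((\eta+w)\ell)$-neighbourhood of it, so the modified cell events are finite-range dependent with a fixed range, and by Definition~\ref{def:probassoc} the modified event for one cell fails with probability exactly $1-\nu_{E_{\mathrm{st}}}\big((1-\epsilon)\lambda,\,Q_{(2\eta+1)\ell},\,Q_{w\ell},\,\beta\big)$. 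Two estimates glue the surrogate to the truth. First, because $E_{\mathrm{st}}$ is increasing, the true cell event holds whenever the thinned--confined one does and no relevant particle violates its confinement; the lower bound $w\ge\sqrt{\tfrac{\eta\beta}{c_2\ell^2}\log(8c_1/\epsilon)}$ is precisely what forces the per-particle confinement-failure probability below $\epsilon/(8c_1)$ (a Gaussian heat-kernel tail for the uniformly elliptic walk over a window of length $\eta\beta$), after which these rare violations can be absorbed into the $\epsilon$-thinning discrepancy. Second, the Chernoff bound $C_M^{-1}\epsilon^2\lambda_0\ell^d\ge\alpha_0$ guarantees that thinning and conditioning do not, except with probability $e^{-\Omega(\alpha_0)}$, strip a cell of the particles its local event consumes. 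Combined with \eqref{eq:lip}, this shows that in a finite-range dependent field, each cell is good with probability at least $1-e^{-\Omega(\alpha_0)}$.

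\emph{Stages 2--3 (domination and construction).} With the range fixed and the defect probability driven to $0$ by taking $\ell$ large (which the theorem permits), the Liggett--Schonmann--Stacey theorem lets me pass to i.i.d.\ Bernoulli$(\tilde q)$ bad cells with $\tilde q\to0$ as $\alpha_0\to\infty$. For such a field I construct $F_+$ by the Lipschitz-percolation exploration: starting from the all-good region at large height, descend one level at a time, never dropping faster than the Lipschitz bound allows and routing around bad cells, and take $F_+(b)$ to be the height at which the exploration settles above $b$. A height-$k$ excursion forces a connected set of $\Omega(k)$ bad cells, an event of probability at most $(c\tilde q)^{\Omega(k)}$, so the exploration terminates a.s.\ and $F_+$ is everywhere finite; the same estimate gives $|F_+(b)-F_+(b')|\le1$ for $\|b-b'\|_1=1$. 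The mirror construction below height $0$ yields $F_-$, and $F=\{(b,F_+(b))\}\cup\{(b,F_-(b))\}$ is then a two-sided Lipschitz surface of good cells. Running the same exploration over all of $\mathbb{T}^d_*$ rather than over $\mathbb{Z}^d$ gives the finite-torus net version and, in particular, the almost-sure existence claimed in the theorem.

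\emph{Stage 4 (surrounding estimate) and the main obstacle.} If $F$ fails to surround the origin at distance $r$, then a path of cells escaping the origin to $\ell_1$-distance exceeding $r$ and disjoint from $F$, together with the Lipschitz constraints on $F_\pm$, forces near the origin a ``defect column'' of physical extent $\Omega(\ell s)$ for some $s\ge r$; estimating its probability directly against the particle process --- as a region compelled to stay atypically depleted of particles throughout the relevant time window --- gives a bound of order $\exp\{-c_3\lambda_0\ell s\}$, and a union bound over the $O(s^d)$ possible positions and over all $s\ge r$ produces the stated series. In $d=2$ this step is lossier: because the planar walk is recurrent, a region stays atypically depleted throughout a time interval of length $\Theta(\beta)$ only with probability $\exp\{-\Theta(\cdot/\log)\}$ rather than $\exp\{-\Theta(\cdot)\}$, which replaces $\lambda_0\ell s$ by $\lambda_0\ell s/\log^{c_4}(\ell s)$. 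I expect Stage~1 to be the real difficulty: turning the stretched-exponential correlation decay into a genuinely finite-range dependent field while keeping the per-cell good probability within $e^{-\Omega(\alpha_0)}$ of $1$ and not destroying the density the local event needs forces a delicate balancing of $\epsilon$, $w$, $\eta$ and $\beta/\ell^2$ --- exactly the coupling of parameters visible in \eqref{eq:lip} and the constraint on $w$.
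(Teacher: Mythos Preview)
Your Stages~1--2 contain a genuine gap that the paper explicitly anticipates. The confinement trick does buy you finite-range dependence in the \emph{spatial} directions, but it does nothing for the \emph{time} direction: even after you restrict each particle's displacement to $Q_{w\ell}$ over the super interval $[\tau\beta,(\tau+\eta)\beta]$, the events $E_{\mathrm{st}}(i,\tau)$ and $E_{\mathrm{st}}(i,\tau')$ for the same $i$ and large $|\tau-\tau'|$ remain correlated, because both are functionals of the \emph{same} particle system. The configuration in the super cube at time $\tau'\beta$ is not an independent Poisson sample; it is the time-$\tau'\beta$ evolution of the configuration at time $\tau\beta$. Concretely, if a ball of radius $R$ around cube $i$ is empty at time $\tau\beta$, it stays empty with positive probability for a further time of order $R^2$, so $E_{\mathrm{st}}(i,\tau)$ is correlated with $E_{\mathrm{st}}(i,\tau+k)$ for all $k$ up to order $R^2/\beta$. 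Since $R$ is unbounded, no fixed-range LSS comparison can absorb this. The paper says exactly this in the introduction: the depletion probability is only stretched-exponential in the volume of the space-time cylinder, ``which prevents us from applying classical methods based on comparison with independent percolation~[LSS].'' Your $\epsilon$-thinning is spent on absorbing confinement violations, not on decorrelating different time slices, so it does not rescue Stage~2.

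The paper defers the full proof to~\cite{Gracar2016a}, but its framing makes clear the route is a genuine multi-scale argument rather than a reduction to i.i.d.\ percolation. One bounds \emph{directly} the probability that a connected set of bad cells of diameter $s$ reaches out from the origin, by showing that such a set forces a space-time region of linear extent $\Theta(\ell s)$ to stay atypically depleted of particles; a renormalisation over scales then converts the stretched-exponential single-cell decay into the bounds $\exp\{-c_3\lambda_0\ell s\}$ (transient case $d\ge 3$) and $\exp\{-c_3\lambda_0\ell s/\log^{c_4}(\ell s)\}$ (recurrent case $d=2$). Your Stage~4 actually contains the correct physical picture for this depletion estimate, including the right explanation of the logarithmic loss in $d=2$; the issue is that this estimate is the \emph{core} of the proof, carried through a hierarchy of scales, not a terminal Peierls step after an LSS reduction that cannot be performed here.
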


\begin{remark}
	The proofs in \cite{Gracar2016a} give the existence of the two-sided Lipschitz surface on the whole of \(\mathbb{Z}^d\), but the very same proof works for the torus.
\end{remark}

\begin{remark}
Theorem~\ref{thrm:surface_event_simple} is key to our framework.
We now briefly explain how it can be used. 
The event $E_\mathrm{st}$ can be any local event, where in $\nu_{E_\mathrm{st}}$, $Q_{(2\eta+1)\ell}$ gives the region on which the event is measurable. 
To control dependences, we consider the larger cube $Q_{w\ell}$, inside which the particles that start from $Q_{(2\eta+1)\ell}$ are conditioned to stay during the time interval $\beta$. 
Then $w$ has to be large enough, as specified in the theorem, so that this conditioning is likely to happen.
Then, $\nu_{E_\mathrm{st}}$ gives the probability that the event happens given that the initial configuration of particle is a Poisson point process of intensity measure 
$(1-\epsilon)\lambda$, just slightly smaller than the intensity measure $\lambda$ we started with. We disregard an ``$\epsilon$-fraction of the particles'' because naturally, in any given space-time cell, 
some particles move atypically and 
will not 
be organized exactly as a Poisson point process; but those particles can be neglected using the assumption that $E_\mathrm{st}$ is increasing. 
Thus~\eqref{eq:lip} requires that $\nu_{E_\mathrm{st}}$ is at least $1-\exp(-\alpha_0)$ for a Poisson point process of intensity $(1-\epsilon)\lambda$. 
This is usually achievable by properly defining the event $E_\mathrm{st}$ to be such that its occurrence increases 
with $\ell$ (the size of the tessellation).
Then, \eqref{eq:lip} also requires that $C_{M}^{-1}\epsilon^2\lambda_0\ell^d\geq \alpha_0$. After fixing $\epsilon$,
this can be satisfied either by setting $\ell$ large enough or by assuming that the constant $\lambda_0$ governing the density of particles is large enough. 
This condition is natural in applications: one either requires the size of cells to be large (which will be the case in our application for the flooding time) or the tessellation is more restricted (for example,
limited to the transmission radius of the particles) and one requires the density of particles to be large enough, as in~\cite{Stauffer2014}.
% Note that if \(r\) is sufficiently large for \Cref{thrm:shell_tail} to hold, we have from 
% \[
% 	\exp\left\{-C\lambda_0\frac{\ell s}{\log^c(\ell s)}\right\}\geq\exp\left\{-C\lambda_0\ell s\right\},
% \] 
% that the Lipschitz surface that surrounds the origin at distance \(r\)  exists in dimensions 3 and above with a probability higher than in 2 dimensions. To simplify we will work with the bound for \(d=2\), which will give us a universal bound on the existence of the Lipschitz net for \(d\geq 2\).
\end{remark}

\subsection{Lipschitz net}
We are now ready to define the \emph{Lipschitz net} on the torus \(\mathbb{T}^d\) that we will use to prove Theorem \ref{thrm:total}. 
% Recall that we represent the space-time tessellations of the torus and time as \((i,\tau)\in\mathbb{Z}^{d+1}\), with \(i\) being space and \(\tau\) being time.
The Lipschitz net, roughly speaking, will be an interlacement of Lipschitz surfaces, where we will take each spatial coordinate as being height in the base-height index, 
and for each of them we will have a pile of surfaces. More formally, 
let \(k\in\left\{0,1,\dots,\left\lfloor\frac{n}{\ell\log^3(n/\ell)}\right\rfloor\right\}\) and \(q\in\{1,2,\dots,d\}\). 
For any $i\in\mathbb{T}^d$, let $i=(i_1,i_2,\ldots,i_d)$.
Define \(\mathbb{L}_k^q\) to be the \(d\)-dimensional 
hyperplane on the space-time tessellation that is orthogonal to the \(q\)-th spatial coordinate, with distance from the origin of \(k\left\lceil\log^3(n/\ell)\right\rceil\), i.e.
\[
	\mathbb{L}_k^q=\left\{(i,\tau)\in\mathbb{T}^{d}\times\mathbb{Z}:\:i_{q}=k\lceil\log^3(n/\ell)\rceil\right\}.
\]
We define \(F_k^q\) to be the Lipschitz surface corresponding to \(\mathbb{L}_k^q\), i.e.\ 
\(F_k^q\) is the two-sided Lipschitz surface for which \(h\) in the base-height index corresponds to \(i_q\) in the space-time index, and for which the Lipschitz functions satisfy \(F_+(b)\geq k\lceil\log^3(n/\ell)\rceil\) and \(F_-(b)\leq k\lceil\log^3(n/\ell)\rceil\). We define the \emph{height of the surface} \(F_k^q\) at \(b\in \mathbb{Z}^d\) to be 
	\[
		\max_{h:(b,h)\in F_k^q}\left|k\lceil\log^3(n/\ell)\rceil-h\right|.
	\] 
% 	In other words, if we translate \(F_k^q\) by \(k\lceil\log^3(n/\ell)\rceil\) along \(h\) towards the origin,
% 	we recover the original definition of the Lipschitz surface from Definition \ref{def:lip_surf}. 
% 	Similarly, we say that a Lipschitz surface exists whenever its height is finite for all \(b\in\mathbb{Z}^d\), which agrees with the original definition of a Lipschitz surface being finite.

Let \(C_0>0\) be an integer constant of our choosing. From now on we assume that \(F_k^q\) is a Lipschitz surface for which the height is at most \(\frac{\log^3(n/\ell)}{2}\) for all \((i,\tau)\in F_k^q\) satisfying \(\tau\in\{0,1,\dots,C_0n/\ell\}\).

\begin{mydef}
	The \emph{Lipschitz net} \(F_{\textrm{net}}\) with constant \(C_0\) is the set of space-time cells \((i,\tau)\in\mathbb{T}^{d}\times\mathbb{Z}\) contained in the union of all \(F_k^q\); i.e, 
	$
		F_\mathrm{net}=\bigcup_{q=1}^d\bigcup_{k=0}^{\left\lfloor\frac{n}{\ell\log^3(n/\ell)}\right\rfloor} F_k^q.
	$
	Moreover, we say that $F_\mathrm{net}$ surrounds the origin at distance $D$ if $F_0^q$ surrounds the origin of $\mathbb{L}_0^q$ at distance $D$ for all $q\in\{1,2,\ldots,d\}$.
\end{mydef}

Note that we have for all \((i,\tau)\in F_{\textrm{net}}\) that the event \(E_{\textrm{st}}(i,\tau)\) holds, which follows directly from the fact that every space-time cell in \(F_{\textrm{net}}\) belongs to at least one Lipschitz surface \(F_k^q\) for some \(k\) and some \(q\).

\begin{thrm}\label{thrm:net} %make theorem
   For any constant $C_0$, there exist a constant $C_1>0$ such that, for any $\delta>0$ and any $\ell=O(n^{1-\delta})$ with $\ell\geq C_1$, 
	the Lipschitz net $F_{\textrm{net}}$ with constant $C_0$ exists and surrounds the origin at distance $O(\log^2 n)$ with probability $1-n^{-\omega(1)}$.
% 	\[
% 		1-2C_0d(n/\ell)^{d+1}\log^{3d-3}(n/\ell)\exp\left\{-C\lambda_0\frac{\ell\log^3(n/\ell)}{\log^c(\ell\log^3(n/\ell))}\right\}.
% 	\]
% 	for \(\ell\) sufficiently large, \(n\gg\ell\) and some positive constant \(C\).
\end{thrm}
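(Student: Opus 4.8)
The plan is to build $F_{\textrm{net}}$ by applying Theorem~\ref{thrm:surface_event_simple} once for each pair $(k,q)$ to obtain the surface $F_k^q$, and then to take a union bound over all $O\!\left(\tfrac{dn}{\ell\log^3(n/\ell)}\right)$ such pairs. For a fixed $q$, we choose the $q$-th spatial coordinate as the height dimension in the base-height index, and shift the origin of the tessellation so that the hyperplane $\mathbb{L}_k^q$ plays the role of ``$h=0$''. Theorem~\ref{thrm:surface_event_simple} then produces a two-sided Lipschitz surface $F_k^q$ almost surely, provided \eqref{eq:lip} holds; since the event $E_{\mathrm{st}}$ we feed in is fixed by the application and $\epsilon$ is fixed, \eqref{eq:lip} is guaranteed once $\ell\geq C_1$ for a suitable constant $C_1$ (this is exactly the ``$\ell$ large enough'' regime discussed in the remark after Theorem~\ref{thrm:surface_event_simple}, using that $C_M^{-1}\epsilon^2\lambda_0\ell^d\geq\alpha_0$ and that $\nu_{E_{\mathrm{st}}}$ is close to $1$). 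The constraint $\ell=O(n^{1-\delta})$ is what keeps $n/\ell\to\infty$ polynomially, so that $\log^3(n/\ell)$ is a genuine growing quantity and the index set for $k$ is nonempty and of the stated size.

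Next I would control the \emph{height} of each $F_k^q$. By definition the height of $F_k^q$ at a base point $b$ is $\max_{h:(b,h)\in F_k^q}|k\lceil\log^3(n/\ell)\rceil-h|$, i.e.\ the vertical deviation of the surface from its reference hyperplane. The key observation is that the surface deviates from the hyperplane by distance $\geq s$ at a given base point only if the surface fails to ``surround'' that base point at distance roughly $s$ — more precisely, a large deviation forces a long path in the cell graph that stays off $F_k^q$, and the tail bound in Theorem~\ref{thrm:surface_event_simple} (applied with the origin translated to that base point) bounds the probability of this by $\sum_{s'\geq s}(s')^d\exp\{-c_3\lambda_0\ell s'/\log^{c_4}(\ell s')\}$ in $d=2$, and the cleaner geometric-type sum in $d\geq 3$. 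Summing this over $s\geq \tfrac12\log^3(n/\ell)$ and over all $O(n^{d+1})$ relevant base points and time cells $\tau\in\{0,\ldots,C_0 n/\ell\}$, the exponential factor $\exp\{-c_3\lambda_0\ell\cdot\tfrac12\log^3(n/\ell)/\mathrm{polylog}\}$ beats the polynomial number of trials, giving a failure probability of $n^{-\omega(1)}$; here it is crucial that $\ell\geq C_1$ so the rate $c_3\lambda_0\ell$ is at least a constant, and that $\log^3(n/\ell)$ dominates any $\log^{c_4}(\ell s)$ correction when $\ell=O(n^{1-\delta})$. This verifies the standing assumption that every $F_k^q$ has height at most $\tfrac12\log^3(n/\ell)$ over the relevant range of $\tau$, which is what makes the piles of surfaces genuinely interleave without gaps (consecutive hyperplanes are $\lceil\log^3(n/\ell)\rceil$ apart, and each surface wanders by at most half that).

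Finally, to get that $F_{\textrm{net}}$ surrounds the origin at distance $O(\log^2 n)$, I would apply the tail bound of Theorem~\ref{thrm:surface_event_simple} directly to each $F_0^q$ with $r=O(\log^2 n)$: the probability that $F_0^q$ does not surround the origin at distance $r$ is at most $\sum_{s\geq r}s^d\exp\{-c_3\lambda_0\ell s/\log^{c_4}(\ell s)\}$ (resp.\ the $d\geq 3$ sum), which for $r$ a large enough multiple of $\log^2 n$ and $\ell$ bounded below by $C_1$ is again $n^{-\omega(1)}$, and then union bound over the $d$ values of $q$. Combining the three union bounds (existence of all $F_k^q$, all heights small, all $F_0^q$ surround the origin), the total failure probability is still $n^{-\omega(1)}$, which proves the theorem. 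The main obstacle is the height-control step: one must check carefully that a deviation of the surface translates into the ``non-surrounding'' event to which the quantitative tail of Theorem~\ref{thrm:surface_event_simple} applies, and that the polylogarithmic correction $\log^{c_4}(\ell s)$ in the $d=2$ bound does not swamp the $\log^3(n/\ell)$ budget when summed against the $\mathrm{poly}(n)$ choices of base point and $\tau$ — this is exactly why the threshold is taken at $\log^3$ rather than $\log^2$, and why the surrounding distance can nonetheless be only $O(\log^2 n)$.
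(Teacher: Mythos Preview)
Your proposal is correct and follows essentially the same route as the paper's own proof: apply Theorem~\ref{thrm:surface_event_simple} to each hyperplane $\mathbb{L}_k^q$ (using translation invariance), convert the height bound into a ``not surrounding at distance $s$'' event to invoke the tail estimate, and union bound over all base cells with $\tau\in\{0,\dots,C_0n/\ell\}$ and over all $(k,q)$. Your write-up is in fact slightly more explicit than the paper's in separating the height control at threshold $\tfrac12\log^3(n/\ell)$ from the surrounding-the-origin claim at the smaller scale $O(\log^2 n)$, and in spelling out why $\ell\geq C_1$ and $\ell=O(n^{1-\delta})$ are each needed.
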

\begin{proof}
Start by considering the plane \(\mathbb{L}_0^1\) and its corresponding Lipschitz surface \(F_0^1\). 
If the height of \(F_0^1\) at the origin is more than \(\frac{\log^3(n/\ell)}{2}\), 
then the Lipschitz surface cannot surround the origin at a distance \(\frac{\log^3(n/\ell)}{2}\). Therefore, since \(\ell\) and \(\log^3(n/\ell)\) are both assumed sufficiently large, we have
by Theorem \ref{thrm:surface_event_simple} that the probability that a two-sided Lipschitz surface around the origin with height at most \(\frac{\log^3(n/\ell)}{2}\) does not exists is at most 
\[
	\sum\nolimits_{s\geq\log^3(n/\ell)/2}s^d\exp\left\{-C\lambda_0\frac{\ell s}{\log^c(\ell s)}\right\}
	\leq \exp\left(-\omega(\log^2n)\right).
\]
% where the inequality holds since the ratio between subsequent terms of the sum is less than \(1/2\) if \(\ell\) is large enough.
Using this and a uniform bound across all space-time cells for which \(\tau\in\{0,1,\dots,C_0n/\ell\}\), we have that the probability that \(F_0^1\) has height at most \(\frac{\log^3(n/\ell)}{2}\) 
for all \((i,\tau)\in F_k^q\) satisfying \(\tau\in\{0,1,\dots,C_0n/\ell\}\), is at least $1-\exp\left(-\omega(\log^2n)\right)$.
% \begin{equation}\label{eq:Fkbound}
% 	1-2C_0(n/\ell)^d\log^{3d}(n/\ell)\exp\left\{-C\lambda_0\frac{\ell\log^3(n/\ell)}{\log^c(\ell\log^3(n/\ell))}\right\}.
% \end{equation}

Next, consider the planes \(\mathbb{L}_k^q\). Since the probability space is translation invariant due to the weights \(\mu_{x,y}\) being i.i.d., this bound holds for any \(k\) and any $q$.
Therefore, by applying a uniform bound across \(k\in\left\{0,1,\dots,\left\lfloor\frac{n}{\ell\log^3(n/\ell)}\right\rfloor\right\}\) and \(q\in\{1,2,\dots,d\}\) 
we obtain that the probability that \(F_k^q\) has maximum height at most \(\frac{\log^3(n/\ell)}{2}\) for all $k$ and $q$ is at least $1-\exp\left(-\omega(\log^2n)\right)$.
Under this assumption, for any given $q$ and two distinct $k,k'$, the surfaces $F_k^q$ and $F_{k'}^q$ do not intersect, producing the Lipschitz net.
\end{proof}

The usefulness of the Lipschitz net is that, once we know it exists for any local event $E_\mathrm{st}$ that is likely enough, 
then one just needs to find a suitable choice for the event $E_\mathrm{st}$ and use the Lipschitz net to show that this event propagates throughout the torus. 
For the case of spread of information, we will use the Lipschitz net to show that once an informed particle enters a cell that is part of the Lipschitz net, then 
information spreads evenly across the torus resulting in a density of informed particles. 
For this, we will use a specific increasing event \(E_{\textrm{st}}\) to obtain that the information spreads with positive speed on each individual surface of \(F_{\textrm{net}}\). 
Then, in order to show that the information also moves across different surfaces of the net, we will need the following geometric property.

\begin{lemma}\label{lem:change_surface}
	Let \(F_{\textrm{net}}\) be the Lipschitz net with constant \(C_0\) and let
	\( F_{k}^{q}\) and \( F_	{k'}^{q'}\) be any two given Lipschitz surfaces that are part of \(F_{\textrm{net}}\), 
	where \(q\neq q'\) and \(k,k'\in\left\{0,1,\dots,\left\lfloor\frac{n}{\ell\log^3(n/\ell)}\right\rfloor\right\}\). 
	For any \(\tau\in\{0,1,\dots,C_0n/\ell\}\) there exist space-time cells \((i,\tau)\in  F_{k}^{q}\) and \((i',\tau)\in  F_{k'}^{q'}\) such that \(\|(i',\tau)-(i,\tau)\|_{1}\leq 1\).
\end{lemma}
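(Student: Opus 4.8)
The plan is to reduce the statement to a two-dimensional slice of space-time and then run a discrete intermediate-value argument. By the symmetry of the coordinates we may assume $q=1$ and $q'=2$. Fix $\tau\in\{0,1,\dots,C_0n/\ell\}$ and freeze the remaining spatial coordinates $i_3,\dots,i_d$ to $0$, so that every cell in play has the form $(i_1,i_2,0,\dots,0,\tau)$ and we work in the two-dimensional torus $\mathbb{T}^2$ of pairs $(i_1,i_2)$; write $m=n/\ell$, let $k_1$ and $k_2$ be fixed integer representatives of $k\lceil\log^3(n/\ell)\rceil$ and $k'\lceil\log^3(n/\ell)\rceil$, and set $L=\log^3(n/\ell)/2$. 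Let $F_+^{(k,1)}$ denote the upper Lipschitz function of $F_k^1$, so that in this slice $F_k^1$ contains the cell $(\,f(i_2),i_2,0,\dots,0,\tau\,)$ for every $i_2\in\mathbb{T}$, where $f(i_2):=F_+^{(k,1)}(i_2,0,\dots,0,\tau)$; the map $f$ is $1$-Lipschitz (Definition~\ref{def:lip_surf}) and, by the standing assumption that $F_k^1$ has height at most $L$ on the time slices $\tau\in\{0,\dots,C_0n/\ell\}$, all its values lie in the band $[k_1-L,k_1+L]$. Since $2L+1<m$ once $n/\ell$ is large (the regime of Theorem~\ref{thrm:net}, as $\ell\ge C_1$ and $\ell=O(n^{1-\delta})$), this band is a proper arc of $\mathbb{T}$, so $f$ lifts uniquely to a map $\bar f$ with values in the integer interval $[k_1-L,k_1+L]$, and we set $\bar f(m):=\bar f(0)$. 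Symmetrically, with $g(i_1):=F_+^{(k',2)}(i_1,0,\dots,0,\tau)$, the surface $F_{k'}^2$ contains $(\,i_1,g(i_1),0,\dots,0,\tau\,)$ for every $i_1$, and $g$ is $1$-Lipschitz and lifts to an $m$-periodic $\bar g:\mathbb{Z}\to\mathbb{Z}$ with all values in $[k_2-L,k_2+L]$.

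I would then study the integer function $\bar D(j):=\bar g(\bar f(j))-j$ on $j\in\{0,1,\dots,m\}$. Since consecutive values of $\bar f$ differ by at most $1$ and $\bar g$ is $1$-Lipschitz, $\bar g(\bar f(j+1))-\bar g(\bar f(j))\in\{-1,0,1\}$, so $\bar D$ is non-increasing and drops by at most $2$ at each step. Because $\bar f(m)=\bar f(0)$ we get $\bar D(m)=\bar D(0)-m$, so the closed interval $[\bar D(m),\bar D(0)]$ has length $m$ and hence contains a multiple $M$ of $m$. As $\bar D$ moves from a value $\ge M$ down to a value $\le M$ in steps of size at most $2$, there is an index $j^\star$ with either $\bar D(j^\star)=M$, or else $\bar D(j^\star)=M+1$ and $\bar D(j^\star+1)=M-1$. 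In the first case $\bar g(\bar f(j^\star))\equiv j^\star\pmod m$, and the single cell $(\,f(j^\star),j^\star,0,\dots,0,\tau\,)$ lies on $F_k^1$ (being of the form $(b,F_+^{(k,1)}(b))$ with $b=(j^\star,0,\dots,0,\tau)$) and simultaneously on $F_{k'}^2$ (being $(b',F_+^{(k',2)}(b'))$ with $b'=(f(j^\star),0,\dots,0,\tau)$), so the two required cells coincide. In the second case $\bar g(\bar f(j^\star))\equiv j^\star+1\pmod m$, so $(\,f(j^\star),j^\star,0,\dots,0,\tau\,)\in F_k^1$ while $(\,f(j^\star),j^\star+1,0,\dots,0,\tau\,)\in F_{k'}^2$, and these cells agree in every coordinate except $i_2$, where they differ by $1$. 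In both cases the two cells sit at time $\tau$ and at $\ell_1$-distance at most $1$, which is the claim.

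The picture behind this is the discrete version of the topological fact that, on a torus, an essential \emph{horizontal} loop --- here the slice of $F_k^1$, which winds once in the $i_2$-direction while staying in a thin band of $i_1$-values --- and an essential \emph{vertical} loop --- the slice of $F_{k'}^2$ --- must cross; the function $\bar D$ just tracks the signed gap between the two curves along the cycle. The step I expect to need the most care is the torus bookkeeping in the very definition of $\bar D$: one must check that $\bar g\circ\bar f$ lifts consistently to $\mathbb{Z}$, and this is precisely where the standing height bound is used, since it forces each sheet of each surface to stay inside a band of width $o(n/\ell)$ and therefore not to wind around the torus in its height coordinate. With that in hand, the remainder is the elementary observation that a monotone integer sequence dropping by exactly $m$ over one cycle must either hit a multiple of $m$ --- i.e.\ vanish on the torus --- or straddle one by a single unit.
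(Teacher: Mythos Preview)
Your proof is correct and follows the same overall strategy as the paper: fix $\tau$, freeze the extra spatial coordinates, and run a discrete intermediate-value argument in the resulting two-dimensional slice to show that the Lipschitz curve coming from $F_k^1$ must touch (or come within $\ell_1$-distance $1$ of) the one from $F_{k'}^2$.

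The one difference is in how the IVT step is carried out. The paper walks a short path in the $i_2$-coordinate from the plane $\mathbb{L}_{k'}^2$ out just past the height band of $F_{k'}^2$, observes that the curve $x\mapsto(F^1(x),x)$ starts ``under'' $F^2$ and ends ``above'' it, and catches the transition. You instead walk once around the entire $i_2$-cycle and use periodicity: the composed gap $\bar D(j)=\bar g(\bar f(j))-j$ drops by exactly $m$ over the cycle in steps of size at most $2$, so it must hit or straddle a multiple of $m$. Both arguments use the height bound in the same essential way (to keep the curves in thin bands so the lifts/``under--above'' notion make sense). Your version is more explicit about the torus bookkeeping and the case analysis; the paper's is a little shorter because it only traverses a path of length $\sim\tfrac12\log^3(n/\ell)$ rather than the full cycle.
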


\begin{proof}
	Let \(q=1\) and \(q'=2\); the proof for other combinations of parameters \(q\) and \(q'\) goes similarly.
	We want to show that for any \(\tau\in\{0,1,\dots,C_0n/\ell\}\) there exist a space-time cell \((i_1,\dots,i_{d},\tau)\in  F_k^1\) and a 
	space-time cell \((j_1,\dots,j_{d},\tau)\in  F_{k'}^2\) such that \(\|(i_1,\dots,i_{d})-(j_1,\dots,j_{d})\|_{1}\leq 1\).
	Fix \(\tau\in\{0,1,\dots,C_0n/\ell\}\) and set the components \((i_3,\dots,i_{d})\) to be the same as \((j_3,\dots,j_d)\). 

	Let \(F^1\) be either of the two Lipschitz functions (see Definition \ref{def:lip_surf}) corresponding to \( F_k^1\),  
	and let \(F^2\) be either of the Lipschitz functions corresponding to \( F_{k'}^2\). 
	Since \((i_3,\dots,i_{d})=(j_3,\dots,j_d)\), to simplify notation we 
	write \(F^1(y):=F^1(y,i_3,\dots,i_d,\tau)\in\mathbb{T}\) and 
	\(F^2(y):=F^2(y,i_3,\dots,i_d,\tau)\in\mathbb{T}\). 
	Therefore it remains to show that there exists $x,y\in\mathbb{T}$ such that 
	$|(F^1(x),x)-(y,F^2(y))|\leq 1$. Assume, by contradiction, that this is not the case.
	
	Let $m^2=k'\lceil\log^3(n/\ell)\rceil$, which is the height of $\mathbb{L}_{k'}^2$, that is, $(0,m^2,0,0,\ldots,0)\in \mathbb{L}_{k'}^2$.
	Next, if $F^2$ is the Lipschitz function corresponding to $F_+$ of $F_{k'}^2$ (refer to Definition~\ref{def:lip_surf}) then set $h^2=m^2+\frac{\log^3(n/\ell)}{2}+1$; otherwise,
	set $h^2=m^2-\frac{\log^3(n/\ell)}{2}-1$. So for all $y\in\mathbb{T}$, we have $|F^2(y)-m^2| \leq |F^2(y)-h^2|$.
	
	For any point $(x,y)\in\mathbb{T}^2$ we say that it is \emph{under} $F^2$ if $|y-m^2|\leq |F^2(x)-m^2|$; otherwise we say it is above $F^2$.
	Note that $(F^1(m^2),m^2)$ is ``under'' the surface $F^2$, and $(F^1(h^2),h^2)$ is above $F^2$. Therefore, we take the shortest sequence $x_1,x_2,\ldots,x_\iota$ from $m^2$ to $h^2$ there must 
	exist a point $x_r$ such that $(F^1(x_r),x_r)$ is under $F^2$ but $(F^1(x_{r+1}),x_{r+1})$ is above $F^2$. Since $F^1$ is Lipschitz, this implies that one of these two points is within distance $1$ from $F^2$.
\end{proof}

\section{Spread of information using the Lipschitz net}\label{sec:spread}
We proceed to showing how the information spreads on \(F_{\textrm{net}}\). We do this by applying
Theorem \ref{thrm:net} with an event that results in the information spreading with positive speed along each individual Lipschitz surface of \(F_{\textrm{net}}\). More precisely, from now on let the increasing event \(E_{\textrm{st}}(i,\tau)\) be defined as below in Definition \ref{def:mainEvent}.

% Let for each vertex \(x\in \prod_{j=1}^d[(i_j-\eta)\ell,(i_j+\eta+1)\ell]\) the number of particles at \(x\) at time \(\tau\beta\) be a Poisson random variable of intensity \(\frac{\lambda_0}{2}\mu_x\), and 
% let \(\Upsilon\) be the collection of such particles. Assume that \(|\Upsilon|\) is sufficiently large with respect to \(d\).
% Furthermore, assume that, at time \(\tau\beta\), there is at least one informed particle \(x_0\) inside \(\prod_{j=1}^d[i_j\ell,(i_j+1)\ell]\) and that this particle remains inside the cube \(\prod_{j=1}^d[(i_j-\eta+1)\ell,(i_j+\eta)\ell]\) for the entire time interval \([\tau\beta,(\tau+1)\beta]\). 
\begin{mydef}[Increasing event \(E_{\textrm{st}}(i,\tau)\)]\label{def:mainEvent}
   Take any $(i,\tau)\in\mathbb{T}^d\times \mathbb{Z}$. Let \(\Upsilon\) be the collection of particles located inside $\prod_{j=1}^d[(i_j-\eta)\ell,(i_j+\eta+1)\ell]$ at time $\tau\beta$.
   Consider a distinguished particle $x_0$ located inside $\prod_{j=1}^d[i_j\ell,(i_j+1)\ell]$ at time $\tau\beta$.
	Define \(E_{\mathrm{st}}(i,\tau)\) to be the event that at time \((\tau+1)\beta\), for all \(i'\in\mathbb{T}^d\) with \(\|i-i'\|_{\infty}\leq\eta\), 
	there is at least one particle from \(\Upsilon\) in \(\prod_{j=1}^d[i_j'\ell,(i_j'+1)\ell]\) that collided with \(x_0\) during \([\tau\beta,(\tau+1)\beta]\).
\end{mydef}
For \(E_{\textrm{st}}(i,\tau)\) defined as above, we have the following result.
The proof of this result uses a few heat-kernel estimates for random walks on $\mathbb{Z}^d$ with i.i.d.\ conductances.

\begin{lemma}\label{prop:event}
	Fix any $\epsilon$, $\eta$ and the ratio $\beta/\ell^2$. Let $w$ satisfy the condition in
	Theorem~\ref{thrm:surface_event_simple}. Then, if \(\ell\) is sufficiently large,
	then there exists a positive constant \(C\) such that for \(E_{\textrm{st}}(i,\tau)\) as defined in Definition \ref{def:mainEvent} and for any $(i,\tau)\in\mathbb{T}^d\times \mathbb{Z}$, we have
	$
		\nu_{E_{\mathrm{st}}}((1-\epsilon)\lambda,Q_{(2\eta+1)\ell},Q_{w\ell},\beta)\geq1-\exp\{-C(1-\epsilon)\lambda_0\ell^{1/3}\}.
	$
\end{lemma}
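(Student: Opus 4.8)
The plan is to bound the failure probability of $E_{\mathrm{st}}(i,\tau)$ by a union bound over the $O(1)$ many target cubes $i'$ with $\|i-i'\|_\infty\le\eta$, so that it suffices to show that, for each such $i'$, with probability $1-\exp\{-C'(1-\epsilon)\lambda_0\ell^{1/3}\}$ there is a particle from $\Upsilon$ that collided with the distinguished particle $x_0$ during $[\tau\beta,(\tau+1)\beta]$ and lies in the target cube at time $(\tau+1)\beta$. By stationarity and translation invariance we may assume $\tau=0$ and work on a single time interval of length $\beta$, with all particles conditioned to have displacement inside $Q_{w\ell}$. First I would isolate a mid-interval window: ask that $x_0$ be somewhere reasonable at time $\beta/3$ and that the collision and subsequent travel happen in $[\beta/3,\beta]$; this decouples the ``meet $x_0$'' part from the ``reach cube $i'$'' part.

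The core estimate is the probability that a single particle of $\Upsilon$, starting from its stationary-ish position in the super cube, (i) collides with $x_0$ at some time in a window of length $\Theta(\beta)$, and then (ii) is located in the prescribed target cube $i'\ell+[0,\ell]^d$ at time $\beta$, all while staying inside $Q_{w\ell}$. For a random walk with uniformly elliptic i.i.d.\ conductances on $\mathbb{Z}^d$ one has the standard Gaussian heat-kernel lower bounds (Barlow-type, or the elementary ones sufficient here since $\beta,\ell$ are of the same constant order): the probability that a particle starting within $O(\ell)$ of $x_0$'s trajectory meets $x_0$ during a time window of length $\Theta(\beta)=\Theta(\ell^2)$ and then diffuses into a fixed $O(\ell)$-displacement target cube is bounded below by a constant $p_\ell>0$; the displacement confinement to $Q_{w\ell}$ costs only a constant factor since $w\gg1$ and $\beta/\ell^2$ is small (this is exactly the regime in which the confinement in Definition~\ref{def:probassoc} and Theorem~\ref{thrm:surface_event_simple} is benign). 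Actually, to get the $\ell^{1/3}$ exponent rather than $\ell^d$, I would not demand that a single good particle exists among all of $\Upsilon$; instead I restrict attention to particles of $\Upsilon$ that start in a well-chosen sub-cube of side $\Theta(\ell^{1/3})$ near $x_0$'s position, of which there are in expectation $\Theta((1-\epsilon)\lambda_0\ell^{1/3})$ (here one uses that $\mu_x\in[C_M^{-1},C_M]$ so the Poisson mean of $\Upsilon$ restricted to that region is $\Theta((1-\epsilon)\lambda_0\ell^{1/3})$), each of which independently has a constant probability $p_\ell$ of being a ``good'' particle in the sense above. The number of good particles for a fixed target $i'$ then stochastically dominates a Poisson random variable of mean $\Theta((1-\epsilon)\lambda_0\ell^{1/3})$, so the probability it is zero is $\exp\{-\Theta((1-\epsilon)\lambda_0\ell^{1/3})\}$; a union bound over the $O(1)$ choices of $i'$ and over the $O(1)$ conditioning events above preserves this order, giving the claimed bound with a suitable constant $C$.

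There is one genuine subtlety: the particles in $\Upsilon$ are not independent of $x_0$, and worse, $x_0$ is itself a random walk whose trajectory we are conditioning on. I would handle this by first conditioning on the entire trajectory of $x_0$ on $[0,\beta]$ (together with the confinement to $Q_{w\ell}$); given this, the remaining particles of $\Upsilon$ are still an independent Poisson process with independent random-walk motions (each confined), so the ``good particle'' events for distinct particles are genuinely independent, and for a fixed particle the collision-and-travel probability $p_\ell$ must be lower-bounded \emph{uniformly over all admissible trajectories of $x_0$}. This uniform lower bound is where the heat-kernel input is really used: since $x_0$ is confined to a cube of side $O(w\ell)$ and we only need it to be within $O(\ell)$ of the starting sub-cube at \emph{some} time in a window of positive length, a continuity/last-exit decomposition of the particle's path gives a constant lower bound not depending on which confined trajectory $x_0$ follows. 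The main obstacle is precisely making this uniform-over-$x_0$'s-path collision lower bound rigorous while staying inside the displacement-confinement of Definition~\ref{def:displacement}; everything else is a union bound plus Poisson-thinning and a Chernoff-type estimate.
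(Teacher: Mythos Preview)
Your overall architecture (condition on the trajectory of $x_0$, use independence of the remaining Poisson particles, then a thinning/Chernoff argument and a union bound over the $(2\eta+1)^d$ target cubes) matches the paper's, but the heart of your estimate contains a genuine gap.

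You claim that a sub-cube of side $\Theta(\ell^{1/3})$ contains in expectation $\Theta\!\big((1-\epsilon)\lambda_0\ell^{1/3}\big)$ particles; in $d$ dimensions the volume is $\ell^{d/3}$, so the mean is $\Theta(\lambda_0\ell^{d/3})$. More seriously, you assert that each such particle collides with $x_0$ during a window of length $\Theta(\ell^2)$ with probability bounded below by a constant $p_\ell$. This is false in $d\ge 3$: a random walk started at distance $r$ hits a (possibly moving) target with probability $O(r^{2-d})$, so for $r\asymp\ell^{1/3}$ the per-particle collision probability is $O(\ell^{(2-d)/3})\to 0$. In $d=2$ the analogous hitting probability over time $\ell^2$ from distance $\ell^{1/3}$ is of constant order, but you have not proved the required \emph{uniform-over-trajectories-of-$x_0$} lower bound; the ``continuity/last-exit decomposition'' sentence is not an argument. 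So as written, the step producing the exponent $\ell^{1/3}$ does not go through.

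The paper obtains the $\ell^{1/3}$ by a \emph{time}-based rather than a \emph{space}-based mechanism, which sidesteps all dimension-dependent hitting estimates. It fixes a short initial window $[\tau\beta,\tau\beta+T]$ with $T=\ell^{5/3}\ll\beta$, samples times $t_j=\tau\beta+jW$ with $W=\ell^{4/3}$ (so there are $T/W=\ell^{1/3}$ samples), and shows via on-diagonal heat-kernel upper bounds that the Poisson mean of particles that are at $\rho(t_j)$ for some $j$ but not at $\rho(t_z)$ for any later $z$ is $\Theta(\lambda_0\,T/W)=\Theta(\lambda_0\ell^{1/3})$, uniformly over the path $\rho$ of $x_0$. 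The collided particles then have the remaining time $\beta-T\asymp\beta$ to reach each target cube, which is handled by a separate Gaussian lower bound (your ``diffuse into the target cube'' step), and a union bound over target cubes finishes. If you want to salvage your spatial approach you would need to do the dimension-dependent hitting computation honestly (it does give a growing exponent, in fact $\ell^{2/3}$, once the two errors are corrected), but the time-discretization route is cleaner and dimension-free.
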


\begin{proof}
Let \(T=\ell^{5/3}\).
Since $\beta/\ell^2$ is fixed, we can set $\ell$ a large enough constant so that $T \ll \beta$ (i.e., $T$ is much smaller than the length of the time interval in the tessellation).
Define \(Q^*:=\prod_{j=1}^d[(i_j-\eta)\ell,(i_j+\eta+1)\ell]\) and assume that at time \(\tau\beta\), for all sites \(x\in Q^*\), the number of particles at \(x\) is a Poisson random variable with mean \((1-\epsilon)\lambda_0\mu_x\). 

We start by stating two two claims and using them to prove the lemma. Then, we give the proof of the claims.  

\begin{claim}\label{cl:1}
   If the distinguished particle \(x_0\) is inside \(\prod_{j=1}^d[i_j\ell,(i_j+1)\ell]\) at time \(\tau\beta\) and \(x_0\) follows a fixed path \(\left(\rho(t)\right)_{\tau\beta\leq t\leq \tau\beta+T}\), then by time \(\tau\beta+T\) the number of particles that have collided with \(x_0\) during \([\tau\beta,\tau\beta+T]\), but were not at the same site as \(x_0\) at time \(\tau\beta\), is a Poisson random variable with intensity at least \(
		C_1(1-\epsilon)\lambda_0\ell^{1/3}
	\) for some positive constant \(C_1\), independent of \(\rho(t)\).
\end{claim}

\begin{claim}\label{cl:2}
   Given that there are \(N\) particles inside of \(Q^*\) at time \(\tau\beta+T\), the probability that at least one of these particles is inside \(Q^{**}:=\prod_{j=1}^d[(i'_j)\ell,(i'_j+1)\ell]\) for any \(i'\) for which \(|i-i'|\leq\eta\) is at least \(
	1-\exp\{-Nc_p\},
	\)
	where \(c_p\) is a positive constant that is bounded away from \(0\) and depends only on \(d\), \(\eta\) and the ratio \(\beta/\ell^2\).
\end{claim} 

Now we use the above claims to prove the lemma. 
% 
% \begin{description}
% 	\item[Step 1.] If the distinguished particle \(x_0\) is inside \(\prod_{j=1}^d[i_j\ell,(i_j+1)\ell]\) at time \(\tau\beta\) and \(x_0\) follows a fixed path \(\left(\rho(t)\right)_{\tau\beta\leq t\leq \tau\beta+T}\), then by time \(\tau\beta+T\) the number of particles that have collided with \(x_0\) during \([\tau\beta,\tau\beta+T]\), but were not at the same site as \(x_0\) at time \(\tau\beta\), is a Poisson random variable with intensity at least \(
% 		C_1(1-\epsilon)\lambda_0\ell^{1/3}
% 	\) for some positive constant \(C_1\), independent of \(\rho(t)\).
% 	\item[Step 2.] Given that there are \(N\) particles inside of \(Q^*\) at time \(\tau\beta+T\), the probability that at least one of these particles is inside \(Q^{**}:=\prod_{j=1}^d[(i'_j)\ell,(i'_j+1)\ell]\) for any \(i'\) for which \(|i-i'|\leq\eta\) is at least \(
% 	1-\exp\{-Nc_p\},
% 	\)
% 	where \(c_p\) is a positive constant that is bounded away from \(0\) and depends only on \(d\), \(\eta\) and the ratio \(\beta/\ell^2\).
% 	\item[Step 3.] We combine steps 1 and 2 to conclude the proof of the lemma.
% \end{description}
	Note that by Definition \ref{def:mainEvent}, \(E_{\textrm{st}}(i,\tau)\) is restricted to the super cube \(Q^*\) and time interval \([\tau\beta,(\tau+1)\beta]\).
	We now define the following 3 events.
	\begin{description}
		\item[\(F_1\):] The distinguished particle \(x_0\) never leaves \(\prod_{j=1}^d[(i_j-\eta+1)\ell,(i_j+\eta-1)\ell]\) during \([\tau\beta,\tau\beta+~T]\).
		\item[\(F_2\):] Let \(C_1\) be the constant from Claim~\ref{cl:1}. During the time interval \([\tau\beta,\tau\beta+T]\) the distinguished particle \(x_0\) collides with at least \(\frac{C_1\lambda_0\ell^{1/3}}{2}\) different particles from \(\Upsilon\) that are in the super cube \(Q^{*}\) at time \(\tau\beta+T\).
		\item[\(F_3\):] Out of the \(\frac{C_1\lambda_0\ell^{1/3}}{2}\) or more particles from \(F_2\), at least one of them is in the cube \(Q^{**}\) at time \((\tau+1)\beta\), for all \(Q^{**}\) for which \(Q^{**}\subset Q^{*}\).
	\end{description}
	By definition of the events, we clearly have that \(\mathbb{P}[E_{\mathrm{st}}(i,\tau)]\geq \mathbb{P}[F_1\cap F_2\cap F_3]\). Also note that \(F_1,F_2\) and \(F_3\) are clearly restricted to the super cube \(Q^*\) and the time interval \([\tau\beta,(\tau+1)\beta]\) and are all increasing events.

	Using the exit probability bound from \cite[Proposition 3.7]{Barlow2004} we have
	\begin{equation}\label{for:F1}
		\mathbb{P}[F_1]\geq 1-C_2\exp\{-C_3\ell^2/T\}=1-C_2\exp\{-C_3\ell^{1/3}\}
	\end{equation}
	for some positive constants \(C_2\) and \(C_3\).

	For the event \(F_2\), we apply the result of Claim~\ref{cl:1}. Note that the bound from Claim~\ref{cl:1} is uniform across all paths \(\rho(\cdot)\) and in particular holds for any path the distinguished particle from the event \(F_1\) might follow. This gives that the intensity of the Poisson point process of particles that are in \(Q^{*}\) at time \(\tau\beta\) and collide with \(x_0\) during \([\tau\beta,\tau\beta+T]\) is at least \((1-\epsilon)\lambda_0 C_1\ell^{1/3}\) for some positive constant \(C_1\). Since every particle that collides with \(x_0\) enters \(\prod_{j=1}^d[(i_j-\eta+1)\ell,(i_j+\eta)\ell]\) during \([\tau\beta,\tau\beta+T]\), we can again use the exit probability bound from \cite[Proposition 3.7]{Barlow2004} to bound the probability that the particle is outside of \(Q^{*}\) at time \(\tau\beta+T\) from below by
	\[
		1-C_a\exp\left\{-\frac{C_b\ell^2}{T}\right\}=1-C_a\exp\{-C_b\ell^{1/3}\},
	\]
	for some positive constants \(C_a\) and \(C_b\). This term can be made as close to \(1\) as possible by having \(\ell\) sufficiently large. We assume \(\ell\) is large enough so that this term is larger than \(2/3\). This gives that the intensity of the process of particles from \(\Upsilon\) that collided with \(x_0\) during \([\tau\beta,\tau\beta+T]\) and are in \(Q^{*}\) at time \(\tau\beta+T\) is at least 
	\[
		\frac{2(1-\epsilon)\lambda_0 C_1\ell^{1/3}}{3}.
	\]
	Using Chernoff's bound (see Lemma \ref{lem:chernoff}) we have that
	\begin{equation}\label{for:F2}
		\mathbb{P}[F_2]\geq 1-\exp\{-(2/3)^2C_1(1-\epsilon)\lambda_0\ell^{1/3}\}.
	\end{equation}
	
	We now turn to \(F_3\). Using the result of Claim~\ref{cl:2}, and a uniform bound across the number of cubes inside a super cube, we have that 
	\begin{equation}\label{for:F3}
		\mathbb{P}[F_3]\geq 1-(2\eta+1)^d\exp\left\{-\frac{C_1(1-\epsilon)\lambda_0\ell^{1/3}}{2}c_p\right\},
	\end{equation}
	where \(c_p\) is a small but positive constant. Taking the product of the probability bounds in (\ref{for:F1}), (\ref{for:F2}) and (\ref{for:F3}), we see that the probability that \(E_{\mathrm{st}}(i,\tau)\) holds is at least
	\begin{equation*}
		1-	\exp\{-C(1-\epsilon)\lambda_0\ell^{1/3}\}
	\end{equation*}
	for some constant \(C\) and all large enough \(\ell\), which proves the claim.
\end{proof}

\begin{proof}[Proof of Claim~\ref{cl:1}]
	For each time $t\in [\tau\beta,\tau\beta+T]$, let $\Psi_t$ be the Poisson point process on $\mathbb{T}^d$ giving the locations at time $t$ of the particles that belong to $\Upsilon$, excluding all particles located at \(\rho(\tau\beta)\) at time \(\tau\beta\). Since the particles that start in \(Q^*\) move around and can leave \(Q^*\), we need to find a lower bound for the intensity of \(\Psi_t\) for times in \([\tau\beta,\tau\beta+T]\). Note that the distinguished particle \(x_0\) we are tracking is not part of \(\Psi\), since \(\Psi\) does not include particles located at \(\rho(\tau\beta)\) at time \(\tau\beta\). 
	
	We will need to apply heat kernel bounds from \cite[Theorem 2.2]{Hambly2009} to the particles in \(Q^*\), so we need to ensure that the time intervals we consider are large enough for the bounds to hold. 
	We will only consider times \(t\in[\ell^{4/3},T]\) so that for large enough \(\ell\), we have \(t\geq\sup_{\substack{x\in Q^*\\y\in Q^*}}\|x-y\|_1\) and so the heat kernel bounds from \cite[Theorem 2.2]{Hambly2009} hold. 
	Then, we have that for all sites \(x\in Q^*\) that are at least \(\ell\) away from the boundary of \(Q^*\) and at any such time \(t\) the intensity of \(\Psi_{\tau\beta+t}\) at vertex $x\in \mathbb{T}^d$ is at least
	\begin{equation*}
		\Psi_{\tau\beta+t}(x)\geq\sum_{\substack{y\in Q^*\\y\neq \rho(\tau\beta)}}(1-\epsilon)\lambda_0\mu_y\cdot \mathbb{P}_y[Y_t=x]
		= (1-\epsilon)\lambda_0\mu_x\sum_{\substack{y\in Q^*\\y\neq \rho(\tau\beta)}}\mathbb{P}_x[Y_t=y],
	\end{equation*}
	where $Y_t$ stands for the location of a simple random walk at time $t$, and $\mathbb{P}_y$ is the measure induced by a simple random walk starting from $y$. 
	In the last step above, we used that the simple random walk is reversible with respect to the measure $\mu$. 
	We now use the exit probability bound from \cite[Proposition 3.7]{Barlow2004} to get that 
	\[
		\sum_{\substack{y\in Q^*}}\mathbb{P}_x[Y_t=y]\geq 1-c_3\exp\{-c_4\ell^2/t\}.
	\]
	Next, we use \cite[Theorem 2.2]{Hambly2009} to account for the particles at \(\rho(\tau\beta)\), yielding 
	\begin{equation*}
		\sum_{\substack{y\in Q^*\\y\neq \rho(\tau\beta)}}\mathbb{P}_x[Y_t=y]\geq 1-c_3\exp\left\{-c_4\ell^2/t\right\}-C_Mc_5t^{-d/2}.
	\end{equation*}
	This gives that for any \(t\in[\ell^{4/3},T]\), the intensity of \(\Psi_{\tau\beta+t}\) is at least
	\[
		\Psi_{\tau\beta+t}(x)\geq (1-\epsilon)\lambda_0 \mu_x(1-c_3\exp\{-c_4\ell^2/T\}-C_Mc_5\ell^{-2d/3}).
	\]
% We note that this lower bound depends only on the intensity measure of the Poisson point process of particles in \(Q^*\) at time \(\tau\beta\), which we will use later to ensure that we are working with increasing events restricted to super cells.

	Let \([\tau\beta,\tau\beta+T]\) be divided into subintervals of length \(W\in(0,T]\), where we set \(W=\ell^{4/3}\) so that it is large enough to allow the use of the heat kernel bounds from \cite[Theorem 2.2]{Hambly2009}. Let \(J=\{1,\dots, \lfloor T/W\rfloor \}\) and \(t_j:=\tau\beta+jW\). 
	Then the intensity of particles that share a site with the distinguished particle \(x_0\) at least once among times \(\{t_1, t_2, \dots, t_{ \lfloor T/W\rfloor }\}\) is at least
	\begin{align*}
		&\sum_{j\in J}\Psi_{t_j}(\rho(t_j))\mathbb{P}_{\rho(t_j)}[Y_{r-t_j}\neq \rho(r) \;\forall r\in\{t_{j+1},\dots,t_{\lfloor T/W\rfloor}\}]\\
		&\geq (1-\epsilon)\lambda_0 C_M^{-1}(1-c_3\exp\{-c_4\ell^2/T\}-C_Mc_5\ell^{-2d/3})\sum_{j\in J}\left(1-\sum_{z>j}\mathbb{P}_{\rho(t_j)}[Y_{t_z-t_j}= \rho(t_z)]\right).
	\end{align*}

	We want to make all of the terms of the sum over \(J\) positive, so we consider the term \(\sum_{z>j}\mathbb{P}_{\rho(t_j)}[X_{t_z-t_j}= \rho(t_z)]\) and show that it is smaller than \(\frac{1}{2}\) for large enough \(\ell\). 
	To do this, we use \cite[Theorem 2.2]{Hambly2009}, which hold when \(W\geq\ell^{4/3}\) and \(\ell\) is large enough, to bound it from above by
	\begin{align}
		\sum_{z>j}\mathbb{P}_{\rho(t_j)}[Y_{t_z-t_j}= \rho(t_z)]
		&\leq \sum_{z>j}C_MC_{HK}(t_z-t_j)^{-d/2}\nonumber\\
		&\leq C_M C_{HK}W^{-d/2}\sum_{z=1}^{T/W-j}z^{-d/2}\label{for:TW}
	\end{align}
	where \(C_{HK}\) is a constant coming from \cite[Theorem 2.2]{Hambly2009}. Then, (\ref{for:TW}) can be bounded from above by
	\begin{equation}
		C_M C_{HK} W^{-d/2}\left(2+\sum_{z=3}^{T/W-j}z^{-d/2}\right)
		\leq C_M C_{HK}W^{-d/2}\left(2+\int_{2}^{T/W}z^{-d/2}dz\right)\label{for:TWintegral}.
	\end{equation}
	Let \(C\) be a constant that can depend on \(C_{HK}\), \(C_M\) and \(d\). Then for \(d=2\), (\ref{for:TWintegral}) is smaller than \(CW^{-1}\log(T/W)\), and for \(d\geq 3\) the expression in (\ref{for:TWintegral}) is smaller than \(CW^{-d/2}\). Thus, setting \(\ell\) large enough, both terms are smaller than \(\frac{1}{2}\).

	Then, as a sum of Poisson random variables, we get that \(\Upsilon'\) is a Poisson random variable with a mean at least
	\[
		(1-\epsilon)\lambda_0 C_M^{-1}(1-c_3\exp\{-2c_4\ell^2/T\}-C_Mc_5\ell^{-2d/3})\tfrac{T}{2W}.
	\]
	Using that \(T=\ell^{5/3}\) and setting \(\ell\) large enough establishes the claim, with \(C_1\) being any constant satisfying \(C_1<\frac{C_M^{-1}}{2}\).
\end{proof}	

\begin{proof}[Proof of Claim~\ref{cl:2}]
	We now prove that for large enough \(\ell\), if there are \(N\) particles inside of \(Q^*\) at time \(\tau\beta+T\), there is at least one of them inside \(Q^{**}\) at time \((\tau+1)\beta\) with probability at least
	\(
	1-\exp\{-Nc_p\}.
	\)

	For \(t^{2/3}\geq\sup_{\substack{x\in Q^*\\y\in Q^{**}}}\|x-y\|_1\), define \(p_t:=\inf_{\substack{x\in Q^*}}\sum_{y\in Q^{**}}\mathbb{P}_x[Y_t=y]\). Then, if we define \(\mathrm{bin}(N,p_t)\) to be a binomial random variable with parameters \(N\in\mathbb{N}\) and \(p_t\in[0,1]\), it directly follows that we can bound probability of one of the \(N\) particles from \(Q^*\) being inside \(Q^{**}\) at time \((\tau+1)\beta\) from below by
	\[
		\mathbb{P}[\mathrm{bin}(N,p_t)\geq1]\geq1-\exp\{-Np_t\}.
	\]

	It remains to show that for \(t=\beta-T\), we have that \(p_t\geq c_p>0\) for some constant \(c_p\). We will again use the heat kernel bounds from \cite[Theorem 2.2]{Hambly2009} for the pair \(x,y\), which hold if \(\|x-y\|_1^{3/2}\leq\beta-T\) for all \(x\in Q^*,y\in Q^{**}\). Given the ratio \(\beta/\ell^2\), \(d\) and \(\eta\), this is satisfied if \(\ell\) is large enough. Then we have that
	\begin{align*}
		p_{\beta-T}=\inf_{x\in Q^*}\sum_{y\in Q^{**}}\mathbb{P}_x[Y_{\beta-T}=y]%\\
% 		&\geq\inf_{x\in Q^*}C_{M}^{-1}\sum_{y\in Q^{**}}q_{{\beta-T}}(x,y)\\
		\geq\inf_{x\in Q^*}C_{M}^{-1}\sum_{y\in Q^{**}}c_1 \beta^{-d/2}\exp\left\{-c_2\frac{\|x-y\|_1^2}{{\beta-T}}\right\}.
	\end{align*}
	Now we use that \(x\) and \(y\) can be at most \(c_\eta\ell\) apart where \(c_\eta\) is a constant depending on \(d\) and \(\eta\) only, and that \(\beta-T\geq\beta/2\) for \(\ell\) large enough. Hence,
	%\[
	%	p_t\geq p:= cm^{d}\exp\{-cm^2\},
	%\]
	\begin{align*}
		p_{{\beta-T}}&\geq \inf_{x\in Q^*}C_M^{-1}\sum_{y\in Q^{**}}c_1 \beta^{-d/2}\exp\left\{-c_2\frac{2(c_\eta\ell)^2}{\beta}\right\}\\
		&= C_M^{-1}c_1\ell^d\left(\frac{1}{\beta}\right)^{d/2}\exp\left\{-c_2\frac{2(c_\eta\ell)^2}{\beta}\right\}\\
		&\geq c_p.
	\end{align*}
\end{proof}
 
Lemma \ref{prop:event} implies that for \(\ell\) large enough, by setting \(\eta\) to be a large enough constant, and defining the increasing event \(E_{\textrm{st}}\) as in Definition~\ref{def:mainEvent}, 
the information spreads among neighboring cells. 
Since the Lipschitz net surrounds the origin at distance $O(\log^2n)$, we have that in at most poly-logarithmic time, 
the initially informed particle will enter some cell \(\prod_{j=1}^d[i_j\ell,(i_j+1)\ell]\) for which \((i,\tau)\) is in some Lipschitz surface \(F\) of $F_\mathrm{net}$. 
Once that holds, we know that the event \(E_{\mathrm{st}}(i,\tau)\) occurs. 
By the definition of \(E_{\mathrm{st}}(i,\tau)\), we obtain that the initially informed particle in \((i,\tau)\) informs other particles causing 
the information to spread to each \((i',\tau+1)\) for which \(\|i'-i\|_{\infty}\leq \eta\).
	
Let \((b,h)\) be the base-height index of the cell \((i,\tau)\in F\). 
Recall that \(h\) is one of the spatial dimensions. 
We will also select one of the \(d-1\) spatial dimensions from \(b\) and denote it \(b_1\). 
Let \(b'\in\mathbb{T}^d_*\) be obtained from \(b\) by increasing the time dimension from 
\(\tau\) to \(\tau+1\), and by increasing the chosen spatial dimension from \(b_1\) to \(b_1+1\). 
Since \(\|b-b'\|_1=2\), we can choose \(h'\in\mathbb{T}\) such that \((b',h')\in F\) and \(|h-h'|\leq 2\), where the latter holds by the Lipschitz property of \(F\). 
Therefore, there must exists \(i'\in\mathbb{T}^d\) such that \((i',\tau+1)\) is the space-time cell corresponding to \((b',h')\) and \(\|i-i'\|_{\infty}\leq 4\). 
Hence, at time \((\tau+1)\beta\), there is an informed particle in the cube indexed by \(i'\) if $\eta$ is at least $4$ and $E_\mathrm{st}(i,\tau)$ holds.

Using this mechanism, we can show that after some time of order \(n\), the information has spread along the surfaces across the entire torus. 
\begin{lemma}\label{prop:phase1}
	Let \(F_{\textrm{net}}\) be the Lipschitz net with constant \(C_0\) which surrounds the origin at distance $O(\log^2n)$. 
	There exists a constant \(C_T>0\), independent of \(C_0\), such that for every \((i,\tau)\in F_{\textrm{net}}\) for which \(\tau\beta\geq C_Tn\), 
	there is at least one informed particle inside the cube \(\prod_{j=1}^d[(i_j-\eta+1)\ell,(i_j+\eta)\ell]\) for all times in \([\tau\beta,(\tau+1)\beta]\).
\end{lemma}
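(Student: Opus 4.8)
The plan is to show that an informed particle becomes present in some cell of $F_{\textrm{net}}$ within poly-logarithmic time, and then to let this presence propagate along the net at the rate of one time-interval per step, so that after $O(n)$ time every cell of $F_{\textrm{net}}$ below the time horizon $C_0 n/\ell$ has been reached. Say that \emph{an informed particle is present in the net-cell $(i,\tau)$} if there is an informed particle in $\prod_{j=1}^d[i_j\ell,(i_j+1)\ell]$ at time $\tau\beta$. The key point is that, since $(i,\tau)\in F_{\textrm{net}}$ forces $E_{\mathrm{st}}(i,\tau)$ to hold, taking that particle as the distinguished particle $x_0$ of Definition~\ref{def:mainEvent} yields, at time $(\tau+1)\beta$, an informed particle in \emph{every} cube $i'$ with $\|i-i'\|_\infty\leq\eta$, and hence presence in every net-cell $(i',\tau+1)$ with $\|i-i'\|_\infty\leq\eta$. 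For the base case, since $F_{\textrm{net}}$ surrounds the origin at distance $O(\log^2 n)$, we view the trajectory of the initially informed particle as a path on the space-time cell grid starting from the origin: after $m$ time-intervals this path has reached a cell at $\|\cdot\|_1$-distance at least $m$ from the origin (the time coordinate alone having increased by $m$), so once $m=\Omega(\log^2 n)$, i.e.\ after time $O(\log^2 n)$, the path meets some surface $F_0^q$. Thus, on the probability-$(1-n^{-\omega(1)})$ event that $F_{\textrm{net}}$ exists (Theorem~\ref{thrm:net}), an informed particle is present in some net-cell $(i_*,\tau_*)$ with $\tau_*\beta=O(\log^2 n)$.

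The propagation then uses two mechanisms. \textbf{Inside a single surface $F_k^q$:} incrementing any base coordinate of a cell of $F_k^q$ --- the time coordinate included --- changes the height by at most one, since $F_\pm$ are Lipschitz; hence presence can always \emph{wait} by stepping one interval forward along the surface with $O(1)$ spatial displacement, and can \emph{travel} at unit speed by instead moving in a spatial base coordinate, exactly as in the paragraph preceding this lemma. Since the base of $F_k^q$ spans the whole torus in its $d-1$ spatial directions and heights stay within $\log^3(n/\ell)/2$ of the centre plane, presence reaches every cell of $F_k^q$ with time coordinate below $C_0 n/\ell$ within $O(n/\ell)$ intervals of first infecting it. \textbf{Between two surfaces:} by Lemma~\ref{lem:change_surface}, for every time-level $\tau\leq C_0 n/\ell$ and every pair $F_k^q,F_{k'}^{q'}$ with $q\neq q'$ there are cells $(i,\tau)\in F_k^q$ and $(i',\tau)\in F_{k'}^{q'}$ with $\|i-i'\|_\infty\leq 1\leq\eta$; so once $F_k^q$ carries presence at time-level $\tau$, applying $E_{\mathrm{st}}(i,\tau)$ and then one forward step inside $F_{k'}^{q'}$ transfers presence to $F_{k'}^{q'}$ at cost $O(1)$ intervals. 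As two surfaces with a common $q$-index are each adjacent in this sense to every surface with a different $q$-index, the surface-adjacency graph has diameter at most $2$; so from $F_0^q$ every surface of the net becomes infected within a further $O(n/\ell)$ intervals, after which the first mechanism fills it. Altogether there is a constant $C_T$, depending only on $\ell,\beta,\eta,d$ --- in particular \emph{not} on $C_0$ --- such that every net-cell $(i,\tau)$ with $\tau\beta\geq C_T n$ carries presence.

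Finally, one upgrades ``presence at the endpoint $\tau\beta$'' to the stated property over all of $[\tau\beta,(\tau+1)\beta]$. Running the propagation one interval earlier gives, via $E_{\mathrm{st}}$ at a net-cell $(\hat{i},\tau-1)$ with $\|\hat{i}-i\|_\infty\leq 1$, an informed particle in \emph{every} cube within $\|\cdot\|_\infty$-distance $\eta-1$ of $i$ at time $\tau\beta$; each of these, together with the $\Theta(\lambda_0\ell^{1/3})$ further particles it informs while staying $\ell$-close to its cube (just as for the event $F_1$ in the proof of Lemma~\ref{prop:event}), densely populates $\prod_{j=1}^d[(i_j-\eta+1)\ell,(i_j+\eta)\ell]$, and a Chernoff estimate bounds by $\exp\{-c\lambda_0\ell^{1/3}\}$ the probability that none of these informed particles remains inside that cube throughout the length-$\beta$ interval. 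A union bound over the at most $\mathrm{poly}(n)$ net-cells absorbs this into the $n^{-\omega(1)}$ error, and enlarging $C_T$ by an additive constant completes the argument.

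The main obstacle is the uniform routing estimate underlying the two propagation mechanisms: one must verify that the net-paths obtained by alternating ``wait / travel within a surface'' with ``hop to another surface'' can be chosen so as to reach \emph{every} target net-cell, to remain inside the controlled time-window $\{0,\dots,C_0 n/\ell\}$, and to have total length $O(n/\ell)$ \emph{with a constant independent of $C_0$}, so that $C_T$ can indeed be fixed before $C_0$. This is precisely why one takes a spatial dimension, rather than time, as the height in the base--height index: it makes every surface of the net span the full spatial extent of the torus at every time-level, which is exactly what lets Lemma~\ref{lem:change_surface} feed presence from one surface to another at every moment.
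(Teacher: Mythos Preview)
Your propagation argument (first two paragraphs) matches the paper's proof: information enters the net within $O(\log^2 n)$ intervals via the surrounding property, spreads across any single surface in $O(n/\ell)$ intervals using the Lipschitz property together with $E_{\mathrm{st}}$, and hops between surfaces of different $q$-index via Lemma~\ref{lem:change_surface}; since the surface-adjacency graph has diameter two (one hop to change $q$, one more to reach any other $k$ with the original $q$), three passes of length $O(n/\ell)$ suffice, giving $C_T$ independent of $C_0$. Your closing remark on why a \emph{spatial} dimension, rather than time, is taken as the height is exactly the point the paper is making.

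The third paragraph, however, contains a genuine gap. The lemma is phrased as a \emph{deterministic} consequence of the hypothesis that $F_{\mathrm{net}}$ exists and surrounds the origin; that hypothesis encodes only that $E_{\mathrm{st}}(i,\tau)$ holds for every net-cell. You then reach outside this: the sub-event $F_1$ from the proof of Lemma~\ref{prop:event} is \emph{not} implied by $E_{\mathrm{st}}$ (the containment $F_1\cap F_2\cap F_3\subset E_{\mathrm{st}}$ there goes only one way, and was used merely to lower-bound $\nu_{E_{\mathrm{st}}}$), and your Chernoff/union-bound estimate on informed particles ``remaining inside that cube throughout'' is a fresh probabilistic event not encoded in the net. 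There is no probability in the lemma's conclusion into which an $n^{-\omega(1)}$ error could be absorbed. The paper, for its part, simply asserts the ``for all times in $[\tau\beta,(\tau+1)\beta]$'' clause at the end of its proof without additional argument; so you have correctly spotted a point that deserves attention, but your remedy changes the character of the statement. If one wants a rigorous version preserving the deterministic form, the natural route is to fold a suitable containment condition on the distinguished particle into the definition of $E_{\mathrm{st}}$ \emph{before} invoking Theorem~\ref{thrm:surface_event_simple}, rather than appending a separate probabilistic estimate after the net has already been built.
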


\begin{proof}
	Let \(E_{\mathrm{st}}(i,\tau)\) be defined as in Lemma \ref{prop:event} and let \(F_{\textrm{net}}\) be the Lipschitz net with constant \(C_0\), 
	corresponding to the event \(E_{\mathrm{st}}(i,\tau)\).
	We have by the fact that \( F_\mathrm{net}\) surrounds the origin at a distance \(O(\log^2n)\) and that each cell represents a time interval of length \(\beta\), 
	that it takes at most $O(\beta \log^2n)$ time for the information to enter \( F_\mathrm{net}\).
	Once the informed particle is in a space-time cell of some surface \( F_{k}^q\) of $F_\mathrm{net}$, 
	we have by the definition of \(E_{\mathrm{st}}(i,\tau)\) with \(\eta=d\), 
	that it takes at most \(\frac{2n}{\ell}\) steps for the information to spread across the surface (moving between neighboring cells), 
	so that all space-time cells \((i,\tau)\in  F_k^q\) for which \(\tau=\frac{2n}{\ell}+O(\log^2 n)\) contain an informed particle.
	
	Next, for any $q',k'$ with $q'\neq q$, we know by Lemma \ref{lem:change_surface} that for any $\tau$ there are neighboring cells $(i,\tau)\in F_k^q$ and $(i',\tau)\in F_{k'}^{q'}$. 
	Therefore, it takes at most \(\beta\) time for the information to enter any surface $F_{k'}^{q'}$ with $q' \neq q$,
	and another $\frac{2n}{\ell}\beta$ amount of time to spread to all cells in those surfaces, so that all cells \((i,\tau)\in  F_{k'}^{q'}\) 
	for which \(\tau=\frac{4n}{\ell}+1+O(\log^2n)\) contains an informed particle.
	It still remains to spread the information to the surfaces $F_{k'}^q$ with $k'\neq k$. Again, this takes at most $\frac{2n}{\ell}\beta+\beta$ time by the same argument above.
% 	It takes again \(\beta\) time for the information to spread from a cell \((i,\tau)\in F_{k}^q\) to a cell \((i',\tau+1)\in F_{k'}^{q'}\) with \(q\neq q'\) and for which \(\|i-i'\|_{1}\leq 1\), and it takes at most a final \(\frac{2n}{\ell}\beta\) amount of time for the information to spread across the surface so that all cells \((i,\tau)\in  F_{k'}^{q'}\) for which \(\tau=\frac{6n}{\ell}+\lceil\log^3(n/\ell)\rceil+2\) contain at least one informed particle.
   Putting everything toghether, we obtain that for any \(k\), any \(q\), and all \((i,\tau)\in  F_k^q\) for which 
   \(\tau\beta\geq C_Tn\geq (\frac{6n}{\ell}+2+O(\log^2n))\beta\), 
   where we set \(C_T\) large enough for the second inequality to hold, there is at least one informed particle in the cube \(\prod_{j=1}^d[(i_j-\eta+1)\ell,(i_j+\eta)\ell]\) for all times in \([\tau\beta,(\tau+1)\beta]\).
\end{proof}

Using Lemma \ref{prop:phase1} and the geometric properties of the Lipschitz net, we can show that there is a density of informed particles everywhere on the torus for an interval of time of order \(n\).

\begin{thrm}\label{prop:phase1density}
	There exists constants \(C_{\beta}\geq 1\) and \(C_{\ell}>0\) such that the following holds.
	Let \(C_T\) be the constant from Lemma \ref{prop:phase1}. Tessellate \(\mathbb{T}^d\) into cubes \((Q_m)_m\) of side length \(C_{\ell}\log^3(n)\). Then, for all times  \(t\in[C_Tn,(C_T+C_{\beta})n]\), there is at least one informed particle in each subcube \(Q_m\) with probability at least
	$
		1-n^{-\omega(1)}.
	$
\end{thrm}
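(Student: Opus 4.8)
The plan is to obtain the statement from Lemma~\ref{prop:phase1} together with the geometry of the net, so that the only probabilistic ingredient is the existence of $F_{\mathrm{net}}$. Fix the constants in the order: $C_\beta\ge 1$ arbitrary; then $C_0$ large enough, in terms of $C_T$, $C_\beta$ and the fixed ratio $\beta/\ell^2$, that $(C_T+C_\beta)n/\beta\le C_0n/\ell$, so that the window $[C_Tn,(C_T+C_\beta)n]$ lies inside the range $\{0,\dots,C_0n/\ell\}$ of time-cell indices on which the surfaces $F^q_k$ of the net obey their height bounds; then $\ell$ large enough that, by Lemma~\ref{prop:event}, condition~\eqref{eq:lip} holds for the event $E_{\mathrm{st}}$ of Definition~\ref{def:mainEvent} (hence Theorems~\ref{thrm:surface_event_simple} and~\ref{thrm:net} apply), and also $\ell\ge C_1$ and $\ell,\eta$ large enough for the cell-to-cell spreading described after Definition~\ref{def:mainEvent}; and finally $C_\ell$ large enough relative to $\ell$ and $\eta$, as quantified below. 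Let $A$ be the event that $F_{\mathrm{net}}$ with constant $C_0$ exists (so that, in particular, each $F^q_k$ is a Lipschitz surface satisfying its height bound and $E_{\mathrm{st}}(i,\tau)$ holds for every $(i,\tau)\in F_{\mathrm{net}}$) and surrounds the origin at distance $O(\log^2 n)$. By Theorem~\ref{thrm:net}, $\mathbb{P}(A)\ge 1-n^{-\omega(1)}$, so it suffices to prove that on $A$ the conclusion holds deterministically.

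So assume $A$ holds. The geometric core is the claim that for every coarse cube $Q_m$ and every $\tau\in\{0,\dots,C_0n/\ell\}$ there is a space-time cell $(i,\tau)\in F_{\mathrm{net}}$ with $\prod_{j=1}^d[(i_j-\eta+1)\ell,(i_j+\eta)\ell]\subseteq Q_m$. To see this, fix the spatial direction $q=1$ and set $R:=\lceil\log^3(n/\ell)\rceil$. The surfaces $F^1_k$ are two-sided Lipschitz surfaces centred on the hyperplanes $\{i_1=kR\}$, consecutive hyperplanes being at $i_1$-distance $R$ (with a single exceptional gap of size $O(R)$ at the torus seam); by the height bound, every cell of $F^1_k$ with $\tau\le C_0n/\ell$ has $|i_1-kR|\le\tfrac12\log^3(n/\ell)\le R/2$; and, being a two-sided Lipschitz surface, $F^1_k$ contains cells $(b,F_+(b))$ and $(b,F_-(b))$ for every base $b\in\mathbb{T}^d_*$, that is, for every value of $(i_2,\dots,i_d,\tau)$. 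Let $L:=C_\ell\log^3(n)/\ell$ be the number of cells along each side of $Q_m$. Since $R/\log^3 n\to 1$, taking $C_\ell$ large enough relative to $\ell$ (e.g.\ $C_\ell=8\ell$ works once $n$ is large) forces $R/2<L/8$ and guarantees that the $i_1$-range of $Q_m$ contains a multiple $k^\ast R$ lying in its middle half; consequently, every cell of $F^1_{k^\ast}$ with $\tau\le C_0n/\ell$ has $i_1$-coordinate at distance more than $L/8$ from both $i_1$-faces of $Q_m$. Now let $(i_2,\dots,i_d)$ be the centre cell of $Q_m$ in those coordinates, put $i_1:=F_+(i_2,\dots,i_d,\tau)$ and $i:=(i_1,\dots,i_d)$; then $(i,\tau)\in F^1_{k^\ast}\subseteq F_{\mathrm{net}}$, and for every $j$ the index $i_j$ sits at distance $\Omega(L)=\Omega(\log^3 n)\gg\eta$ from both $j$-faces of $Q_m$, so $\prod_{j=1}^d[(i_j-\eta+1)\ell,(i_j+\eta)\ell]\subseteq Q_m$, proving the claim.

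To finish, fix a coarse cube $Q_m$ and any integer $\tau$ with $C_Tn\le\tau\beta$ and $\tau\le C_0n/\ell$, and take $(i,\tau)\in F_{\mathrm{net}}$ as in the claim. Since $\tau\beta\ge C_Tn$, Lemma~\ref{prop:phase1}---which holds deterministically on $A$, as its only hypotheses are that the net exists and surrounds the origin---provides an informed particle inside $\prod_{j=1}^d[(i_j-\eta+1)\ell,(i_j+\eta)\ell]\subseteq Q_m$ for all $t\in[\tau\beta,(\tau+1)\beta]$, hence an informed particle in $Q_m$ throughout that slab. Letting $\tau$ range over $\{\lceil C_Tn/\beta\rceil,\dots,\lfloor(C_T+C_\beta)n/\beta\rfloor\}$, the union of the corresponding slabs covers $[C_Tn,(C_T+C_\beta)n]$ up to a left sliver of length at most $\beta=O(1)$; this sliver is harmless and is covered, e.g., by the slab of $\tau=\lceil C_Tn/\beta\rceil-1$, using the slack in the choice of $C_T$ (the proof of Lemma~\ref{prop:phase1} shows the information already fills the whole net by a time of order $\tfrac{6n}{\ell}\beta+O(\beta\log^2 n)$, well before $C_Tn$, so one may equivalently replace $C_T$ by $C_T+1$ at no cost). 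Hence on $A$ every $Q_m$ contains an informed particle at every $t\in[C_Tn,(C_T+C_\beta)n]$; since $\mathbb{P}(A)\ge 1-n^{-\omega(1)}$, the theorem follows.

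The only genuinely delicate point is the geometric claim: $C_\ell$ must be calibrated against the constants $\ell$ and $\eta$ so that for \emph{every} coarse cube and \emph{every} relevant time slab some surface $F^1_{k^\ast}$ threads the deep interior of the cube while its $O(\log^3(n/\ell))$ fluctuations keep it away from the cube's boundary. After that, the argument is pure bookkeeping; in particular no union bound over the cubes $Q_m$ is needed, because on the single event $A$ the presence of an informed particle in each $Q_m$ is deterministic---the union bound over space-time cells is already subsumed in the $1-n^{-\omega(1)}$ estimate of Theorem~\ref{thrm:net}.
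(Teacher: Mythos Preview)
Your proof is correct and follows essentially the same approach as the paper's own proof: reduce everything to the single high-probability event that $F_{\mathrm{net}}$ exists (Theorem~\ref{thrm:net}), then argue deterministically via Lemma~\ref{prop:phase1} and the geometry of the net. The only difference is presentational: the paper phrases the geometric step by bounding the gap $|h-h'|\le 2\log^3(n/\ell)$ between two \emph{consecutive} surfaces $F_s^q$ and $F_{s+1}^q$ over a common base, concluding that any cube of side length at least $2\ell\log^3(n/\ell)+2\eta\ell$ must be threaded by some surface; you instead directly locate a specific $k^\ast$ whose surface lies in the deep interior of $Q_m$ and pick the center base. These are two ways of saying the same thing, and your version is arguably more explicit about why the $\eta$-enlarged cell stays inside $Q_m$.
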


\begin{proof}
	Fix \(\ell\) sufficiently large for Lemma \ref{prop:event} and Theorem \ref{thrm:surface_event_simple} to hold and recall that the ratio \(\beta/\ell^2\) is fixed.
	Let also \(n\gg \ell\). 
	Then, there exists a constant \(C_T\) so that, for any large enough choice of \(C_0\), Lemma \ref{prop:phase1} gives that 
	for every space-time cell \((i,\tau)\) of the Lipschitz net \(F_{\textrm{net}}\) that satisfies \(\tau\beta\geq C_Tn\), 
	there is at least one informed particle in the region \(\prod_{j=1}^d[(i_j-\eta+1)\ell,(i_j+\eta)\ell]\) at all times in \([\tau\beta,(\tau+1)\beta]\). 
	We can, without loss of generality, assume \(C_T\) is such that \(C_Tn=\beta\tau^*\) for some \(\tau^*\in\mathbb{N}\). Then, we only have to show that for all cubes \(Q_m\) of side length \(C_{\ell}\log^3(n)\), there exist space-time cells \((i,\tau)\) such that the region \(\prod_{j=1}^d[(i_j-\eta+1)\ell,(i_j+\eta)\ell]\) is contained in \(Q_m\) and such that \([C_Tn,(C_T+C_{\beta}) n]\subseteq\bigcup_{\tau}[\tau\beta,(\tau+1)\beta]\), where \(C_{\beta}\) is a constant greater or equal to \(1\). 
	
	Let \(C_{\beta}=k^*\beta\) where \(k^*\) is the smallest integer for which \(k^*\beta\geq 1\) and fix the Lipschitz net constant \(C_0\) to be greater or equal to \((C_T+C_{\beta})\ell/\beta\). Then, we have from Theorem \ref{thrm:net} that the Lipschitz net with constant \(C_0\) exists with probability at least 
	$
		1-n^{-\omega(1)}.
	$
	
	We now show that if this Lipschitz net exists, the lemma holds.
	Let \( F_s^q\) and \( F_{s+1}^q\) be any two consecutive two-sided surfaces of the Lipschitz net and let \((b,h)\in  F_s^q\) and \((b,h')\in F_{s+1}^q\) be two base-height cells with the same base.
	By definition of the Lipschitz net, we have that the height of each Lipschitz surface in the net is at most \(\frac{\log^3(n/\ell)}{2}\) for all space-time cells that satisfy \(\tau\in\{0,1,\dots,C_0n/\ell\}\). 
	Since the base-height cells \((b,h)\) and \((b,h')\) might belong to opposite sides of the two-sided Lipschitz surfaces, we therefore have that \(|h-h'|\leq 2\log^3(n/\ell)\) for all base-height cells for which \(\tau\beta<(C_T+C_{\beta})n\leq C_0\beta n/\ell\). Note that this holds for all \(q\in\{1,\dots,d\}\) and 
	recall  that by Lemma \ref{prop:phase1} there is an informed particle inside the region \(\prod_{j=1}^d[(i_j-\eta+1)\ell,(i_j+\eta-1)\ell]\) throughout the entire time interval \([\tau\beta,(\tau+1)\beta]\). 
	Therefore, for every cube of side length at least \(2\ell\log^3(n/\ell)+2\eta\ell\) on the torus and throughout every time interval of the form above, there is at least one informed particle inside the cube.
% 	For \(n\geq\ell\exp\{\sqrt[3]{2\eta}\}\) and \(C_{\ell}:=5\ell\), this gives that any cube \(Q_m\) of side-length \(C_{\ell}\log^3(n)\) contains at least one informed particle at all times in \([\tau\beta,(\tau+1)\beta]\). 
	By repeating this argument for all \(\tau\) that satisfy \(\tau\beta\in[C_Tn,(C_T+C_{\beta})n)\), we have that this holds for the entire time interval \([C_Tn,(C_T+C_{\beta})n]\).
\end{proof}

%\section{Proof of Theorem \ref{thrm:total}}\label{sect:density}

Before turning to the proof of Theorem~\ref{thrm:total}, we state a theorem that gives that if we start with a density of particles on a cube, 
regardless of how they are placed inside some subcubes, 
we can couple their positions after some time with a Poisson point process that is independent of their initial locations. 
This gives a type of local mixing property for random walks on $\mathbb{T}^d$ with i.i.d.\ conductances. 
For the proof of this technical result, refer to \cite[Theorem 3.1]{Gracar2016}.

\begin{thrm}\label{thrm:mixing}
	Let \(G\) be a uniformly elliptic graph with edge weights \(\mu_{x,y}\). There exist constants \(c_0\), \(c_1\), \(C>0\) such that the following holds.
	Fix \(K>\ell>0\) and \(\epsilon\in(0,1)\). Consider the cube \(Q_K\) tessellated into subcubes \((T_i)_{i}\) of side length \(\ell\) and assume that $\ell$ is large enough.  
	Let \((x_j)_{j}\subset  Q_{K}\) be the locations at time \(0\) of a collection of particles, such that each subcube \( T_i\) contains at least \(\sum_{y\in  T_i}\beta\mu_y\) particles for some \(\beta>0\).
	Let \(\Delta\geq c_0\ell^2\epsilon^{-4/\Theta}\) where \(\Theta\) is a constant that depends on the weight bounds.
	For each \(j\) denote by \(Y_j\) the location of the \(j\)-th particle at time \(\Delta\). 
	Fix \(K'>0\) such that \(K-K'\geq\sqrt{\Delta}c_1\epsilon^{-1/d}\). Then there exists a coupling \(\mathbb{Q}\) of an independent Poisson point process \(\psi\) with intensity measure \(\zeta(y)=\beta(1-\epsilon)\mu_y\), \(y\in \mathcal{C}_{\infty}\), and \((Y_j)_{j}\) such that within \( Q_{K'}\subset  Q_K\), \(\psi\) is a subset of \((Y_j)_{j}\) with probability at least
	\[
		1-\sum_{y\in  Q_{K'}}\exp\left\{-C\beta\mu_y\epsilon^2\Delta^{d/2}\right\}.
	\]
\end{thrm}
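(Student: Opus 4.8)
The plan is to build the coupled Poisson process $\psi$ entirely from the particle trajectories together with independent auxiliary randomisation, \emph{never} from the initial positions $(x_j)_j$; independence of $\psi$ from the starting configuration is then automatic, and the work lies in showing that the process so produced (a) lands, on $Q_{K'}$, inside the collection $(Y_j)_j$, (b) is genuinely a Poisson point process of intensity $\zeta(y)=\beta(1-\epsilon)\mu_y$, and (c) achieves (a) off the stated exceptional event. The only analytic input beyond elementary Poisson and Chernoff estimates is the Gaussian heat-kernel bound for the walk $Y$ on $(G,\mu)$ from \cite{Hambly2009,Barlow2004}: for $\|x-y\|_1\le C\sqrt{\Delta}$,
\[
	c\,\mu_y\,\Delta^{-d/2}\le\mathbb{P}_x[Y_{\Delta}=y]\le C\,\mu_y\,\Delta^{-d/2},
\]
together with its near-translation-invariance, $\mathbb{P}_x[Y_\Delta=y]=(1+O(\epsilon))\,\mathbb{P}_{x'}[Y_\Delta=y]$ whenever $\|x-x'\|_1\le\epsilon\sqrt{\Delta}$. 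It is exactly to make the multiplicative error here $O(\epsilon)$ rather than merely $O(1)$ that the hypothesis $\Delta\ge c_0\ell^2\epsilon^{-4/\Theta}$ is needed, with $\Theta$ the exponent from the heat-kernel estimates that depends only on the conductance bounds, and this is essentially the only place the lower bound on $\Delta$ enters.

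I would organise the construction at two scales. Tessellate $Q_K$ into \emph{target boxes} $V$ of side $\asymp\sqrt{\Delta}$; to each $V$ associate its dilation $\widehat{V}\supseteq V$ by $\asymp\sqrt{\Delta}$, which still lies inside $Q_K$ because $K-K'\ge\sqrt{\Delta}\,c_1\epsilon^{-1/d}$, and tessellate $\widehat{V}$ into \emph{source sub-blocks} $S$ of side $\asymp\epsilon\sqrt{\Delta}$; by the density hypothesis every such $S$ holds at least $\beta\mu(S)\gtrsim\beta(\epsilon\sqrt{\Delta})^{d}$ particles at time $0$. Run every particle for time $\Delta$. A particle that starts in a source sub-block $S$ of $\widehat{V}$ and ends at a site $y\in V$ is \emph{offered} to $y$ and, using an independent uniform mark, \emph{accepted} with probability $g_S(y)/\mathbb{P}_{x_j}[Y_\Delta=y]$, where $g_S(\cdot)$ is a Gaussian lower bound for $\mathbb{P}_x[Y_\Delta=\cdot]$ holding uniformly over $x\in S$; since $\operatorname{diam}(S)\le\epsilon\sqrt{\Delta}$ this ratio lies in $[1-O(\epsilon),1]$. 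The point of rejection sampling is that, conditionally on acceptance, a particle of $S$ lands at $y$ with probability proportional to $g_S(y)$, \emph{free of the particle's starting point in $S$}, while its unconditional acceptance probability equals $\sum_{z}g_S(z)$, again free of it. Summing the resulting intensities over all source sub-blocks of $\widehat{V}$ and using reversibility --- $\sum_x\mathbb{P}_y[Y_\Delta=x]=1$, so the Gaussian weights deposited at $y$ by the various sub-blocks sum to $1+O(\epsilon)$ --- gives that the mean number of accepted particles landing at each $y\in V$ is at least $(1-O(\epsilon))\beta\mu_y$.

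Finally I would Poissonise and estimate the error box by box. Fix a target box $V$. The total number of accepted particles in $V$ is a sum of independent indicators with mean $\ge(1-\epsilon/2)\beta\mu(V)\asymp\beta\Delta^{d/2}$, so by Chernoff it is at least $(1-\epsilon)\beta\mu(V)$ except on an event of probability $\exp\{-c\epsilon^2\beta\mu(V)\}$, and, conditionally on their number, these particles occupy i.i.d.\ locations in $V$ with the fixed law proportional to $\mu$. Into such a configuration one embeds, as a sub-collection, a Poisson point process on $V$ of intensity $\beta(1-\epsilon)\mu$, the embedding failing only if this Poisson process has more than $(1-\epsilon)\beta\mu(V)$ points in total, an event of probability $\exp\{-c\epsilon^2\beta\mu(V)\}$. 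Since $\mu(V)\asymp\Delta^{d/2}$ and $\mu_y\asymp1$, each of these two failure probabilities is at most $\sum_{y\in V}\exp\{-C\beta\mu_y\epsilon^2\Delta^{d/2}\}$ for a suitable constant $C$; a union bound over the target boxes inside $Q_{K'}$ then yields the claimed bound $\sum_{y\in Q_{K'}}\exp\{-C\beta\mu_y\epsilon^2\Delta^{d/2}\}$. On the complementary event $\psi$ is by construction a sub-collection of $(Y_j)_j$ inside $Q_{K'}$; the boxes being disjoint and the motions and marks independent, the pieces of $\psi$ in distinct boxes are independent, so $\psi$ is a genuine Poisson point process of intensity $\zeta$, and being a measurable function of the trajectories and the marks only, it is independent of $(x_j)_j$.

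The main obstacle is to meet two conflicting demands simultaneously: $\psi$ must be \emph{exactly} Poisson with the prescribed intensity $\beta(1-\epsilon)\mu_y$, and it must be \emph{independent of the initial configuration}. The heat kernel genuinely remembers where a particle started, and that memory has to be erased; the rejection-sampling device does this, but only if the source sub-blocks are pushed down to the sub-diffusive scale $\epsilon\sqrt{\Delta}$ and only if the heat-kernel estimates are sharp enough to have multiplicative error $O(\epsilon)$ there --- which is precisely why $\Delta$ must be as large as $c_0\ell^2\epsilon^{-4/\Theta}$. A secondary, purely geometric, point is to keep, for every target box meeting $\partial Q_{K'}$, its $\asymp\sqrt{\Delta}$-thick buffer of source particles inside $Q_K$; this is guaranteed with ample room by $K-K'\ge\sqrt{\Delta}\,c_1\epsilon^{-1/d}$.
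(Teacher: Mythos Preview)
The paper does not prove this theorem; it is imported from \cite[Theorem~3.1]{Gracar2016} and used as a black box in Section~\ref{sec:spread}, so there is no in-paper argument to compare yours against.

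Your outline has the right ingredients --- Gaussian heat-kernel estimates with multiplicative error $O(\epsilon)$ at the sub-diffusive scale, a rejection step to decouple the landing law from the precise starting point, a Chernoff bound at the $\sqrt{\Delta}$-box scale, and an embedding of an independently sampled Poisson --- and is broadly along the lines of the actual construction. There is, however, a genuine gap at the Poissonisation step. With the acceptance ratio $g_S(y)/\mathbb{P}_{x_j}[Y_\Delta=y]$, an accepted particle from sub-block $S$ has conditional landing law proportional to $g_S$, and $g_S$ is essentially a Gaussian centred near $S$: over a target box $V$ of side $\asymp\sqrt{\Delta}$ its shape varies by an order-one factor as $S$ ranges over $\widehat V$. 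The accepted particles in $V$ are therefore independent but \emph{not} identically distributed, and your assertion that ``conditionally on their number, these particles occupy i.i.d.\ locations in $V$ with the fixed law proportional to $\mu$'' is false. This is not a cosmetic slip: if one forces a common landing law $\propto\mu$ by using a single acceptance ratio $c\mu_y\Delta^{-d/2}/\mathbb{P}_{x_j}[Y_\Delta=y]$, then $c$ must be the infimum of the normalised heat kernel over all of $\widehat V\times V$, the overall acceptance rate collapses to a fixed constant strictly below $1$, and one can only recover a Poisson of intensity $c'\beta\mu$ with some $c'<1$, not $(1-\epsilon)\beta\mu$. Getting the full $(1-\epsilon)$ requires keeping the $S$-dependent densities $g_S$ and organising the Poisson coupling so that it does not rely on a common single-point law --- for instance, first discarding each $\ell$-cube $T_i$ down to exactly $\lfloor\beta\mu(T_i)\rfloor$ particles (so that the law of the thinned ensemble no longer depends on $(x_j)_j$), and then coupling sub-block by sub-block before superposing and invoking reversibility to identify the total intensity as $(1-O(\epsilon))\beta\mu_y$.

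A smaller point: ``being a measurable function of the trajectories and the marks only, $\psi$ is independent of $(x_j)_j$'' is not a valid argument, since the trajectories begin at $(x_j)_j$. What one can and should show is that the \emph{marginal law} of $\psi$ is $\mathrm{Poisson}(\zeta)$ for every fixed starting configuration obeying the density hypothesis.
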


\subsection{Proof of Theorem \ref{thrm:total}}
\begin{proof}
	Let \(C_T\) be the constant from Lemma \ref{prop:phase1} and let \(C_{\beta}\) and \(C_{\ell}\) be the constants from Theorem \ref{prop:phase1density}. 
	We want to bound the probability that at time \((C_T+C_{\beta})n\) there is at least one particle on \(\mathbb{T}^d\) that is not informed. 
	By using that the particles on \(\mathbb{T}^d\) form a Poisson point process with intensity \(\lambda(y)=\lambda_0\mu_y\), we have that this probability can be bounded from above by
	$%\begin{equation}\label{eq:mainBound}
		\lambda_0p_n\sum_{y\in\mathbb{T}^d}\mu_y,
	$%\end{equation}
	where \(p_n\) is an upper bound for the probability that a single particle is not informed by time \((C_T+C_{\beta})n\) on the torus of side length \(n\), uniformly on the initial location of the particle. 
	We now proceed to find the bound \(p_n\).
	
	Let \(F\) be the event that a particle located somewhere on the torus does not become informed during \([C_Tn,(C_T+C_{\beta})n]\). Note that the probability that a particle does not get informed by time \((C_T+C_{\beta})n\) is smaller than the probability of \(F\), so \(p_n\leq\mathbb{P}[F]\). Let \(t\in(0,C_{\beta}n)\) be a time step we will fix later and consider the time interval \([C_Tn,(C_T+C_{\beta})n]\) split into subintervals of length \(t\), i.e.\ let the interval be split into subintervals of the form \([C_Tn+kt,C_Tn+(k+1)t]\) for \(k\in\{0,1,\dots,\lfloor C_{\beta}n/t\rfloor-1\}\). Let \(F_k\) denote the event that a particle located somewhere on the torus does not become informed during the time interval \([C_Tn+kt,C_Tn+(k+1)t]\). We then have that
	\[
		\mathbb{P}[F]\leq\mathbb{P}[F_0\cap F_1\cap\dots\cap F_{\lfloor C_{\beta}n/t\rfloor-1}].
	\]
	
	Tessellate \(\mathbb{T}^d\) into cubes \((Q_i)_i\) of side length \(C_{\ell}\log^3(n)\), indexed by \(i\). Let \(D_k\) be the event that time \(C_Tn+kt\) there is at least one informed particle in every cube \(Q_i\). We can then write
	\begin{equation}
		\mathbb{P}[F_0\cap F_1\cap\dots\cap F_{\lfloor C_{\beta}n/t\rfloor-1}]
% 	&\leq \mathbb{P}\left[\left(\bigcap_{k=0}^{\lfloor C_{\beta}n/t\rfloor-1}F_{k}\right)\cap\left(\bigcap_{k=0}^{\lfloor C_{\beta}n/t\rfloor-1}D_{k}\right)\right]\nonumber\\
	\leq \mathbb{P}\left[\bigcap_{k=0}^{\lfloor C_{\beta}n/t\rfloor-1}\left(F_{k}\cap D_{k}\right)\right]
	+\mathbb{P}\left[\bigcup_{k=0}^{\lfloor C_{\beta}n/t\rfloor-1}D^\mathsf{c}_{k}\right]\label{eq:density_noDensity}.
	\end{equation}
	
	To bound the second term, we apply Theorem \ref{prop:phase1density}, which gives that there is at least one informed particle in every cube \(Q_i\) of side length \(C_{\ell}\log^3(n)\) for all times during \(t\in[C_Tn,(C_T+C_{\beta})n]\) with high probability. Therefore, it holds that
	\begin{equation}\label{eq:densityBound}
		\mathbb{P}\left[\bigcup_{k=0}^{\lfloor C_{\beta}n/t\rfloor-1}D^\mathsf{c}_{k}\right]= n^{-\omega(1)}.	
	\end{equation}
	
	We now focus on the first term of (\ref{eq:density_noDensity}). By rearranging the expression inside the probability and using the chain rule, we have that
	\[
		\mathbb{P}\left[\left(\bigcap_{k=0}^{\lfloor C_{\beta}n/t\rfloor-1}F_{k}\right)\cap\left(\bigcap_{k=0}^{\lfloor C_{\beta}n/t\rfloor-1}D_{k}\right)\right]\leq \mathbb{P}[F_{0}\cap D_0]\prod_{k=1}^{\lfloor C_{\beta}n/t\rfloor-1}\mathbb{P}\left[F_{k}\cap D_{k}\;\Big|\;\bigcap_{j<k}F_{j}\cap D_{j}\right].
	\]
	In order to bound the terms \(\mathbb{P}\left[F_{k}\cap D_{k}\;\middle|\;\bigcap_{j<k}F_{j}\cap D_{j}\right]\), first note that 
	\begin{align*}
		\mathbb{P}\left[F_{k}\cap D_{k}\;\Big|\;\bigcap_{j<k}F_{j}\cap D_{j}\right]&\leq\mathbb{P}\left[F_{k}\;\Big|\;D_{k}\cap\bigcap_{j<k}F_{j}\cap D_{j}\right]\mathbb{P}\left[D_{k}\;\Big|\;\bigcap_{j<k}F_{j}\cap D_{j}\right]\\
		&\leq \mathbb{P}\left[F_{k}\;\Big|\;D_{k}\cap\bigcap_{j<k}F_{j}\cap D_{j}\right],
	\end{align*}
	and similarly, 
	$
		\mathbb{P}[F_{0}\cap D_0]=\mathbb{P}[F_{0}\;|\; D_0]\mathbb{P}[D_0]\leq \mathbb{P}[F_{0}\;|\; D_0].
	$
	
	Next, we show a bound for \(\mathbb{P}\left[F_{k}\;\Big|\;D_{k}\cap\bigcap_{j<k}F_{j}\cap D_{j}\right]\) that holds uniformly on all configurations for which \(D_k\) holds.
	We do this by applying Theorem \ref{thrm:mixing} to find a uniform bound on the probability of a particle remaining uninformed, given there is a density of informed particles on the torus \(\mathbb{T}^d\) at the beginning of the time interval we consider. More precisely, we set the terms of Theorem \ref{thrm:mixing} as follows, where we mark them with a bar to help distinguish them from other terms in this proof. Let
		\(\bar K=n\), 
		\(\bar \ell=C_{\ell}\log^3(n)\), and 
		\(\bar \epsilon=\frac{1}{2}\). Let  
		\(\bar \Delta=C_{\Theta}\log^8(n)\), where \(C_{\Theta}\) is a constant sufficiently large for \(\bar \Delta\) to satisfy the conditions of Theorem \ref{thrm:mixing} for all \(n\). We fix the time step \(t\) to be equal to \(\bar \Delta\) and let
		\(\bar K'=n-C_{\bar\epsilon}\sqrt{\bar\Delta}\), where \(C_{\bar\epsilon}=c_1\bar\epsilon^{-1/d}\).
	We now have by the definition of \(D_k\) for every \(k\in\{0,1,\dots,\lfloor C_{\beta}n/\bar\Delta\rfloor-1\}\) that at time \(C_Tn+kt\) there is at least one informed particle in every subcube \(Q_i\), so there are at least 
	\[
	\frac{1}{C_MdC_{\ell}^d\log^{3d}(n)}\sum_{y\in Q_i}\mu_y
	\]
	informed particles in every cube. We set the parameter \(\bar\beta\) from Theorem \ref{thrm:mixing} to be \(\bar\beta=\frac{1}{C_MdC_{\ell}^d\log^{3d}(n)}\) and apply the theorem. This gives us that after the informed particles move around for time \(\bar\Delta\), they stochastically dominate a Poisson point process of intensity \(\bar\zeta(y)=\frac{1}{2}\frac{1}{C_MdC_{\ell}^{d}\log^{3d}(n)}\mu_y\) inside the cube of side length \(\bar K'\). Using (\ref{eq:mu_bounds_new}), we have that this coupling fails with probability at most
	%If we apply \Cref{thrm:mixing} in combination with (\ref{eq:mu_bounds_new}), we have even that after moving, the particles stochastically dominate a homogeneous Poisson point process of intensity \(\bar\zeta(y)=\frac{1}{2}\frac{1}{C_M^2d^2C_{\ell}^{d}\log^{2d}(n)}\) and that this coupling fails with probability at most
	\begin{equation}\label{eq:couplingFails}
	\sum_{y\in Q_{K'}}\exp\left\{-C\frac{1}{4}\frac{1}{C_MdC_{\ell}^{d}\log^{3d}(n)}C_{\Theta}^{d/2}\log^{4d}(n)\mu_y\right\}\leq n^d\exp\{-C_1\log^d(n)\},
	\end{equation}
	where \(C\) is the constant from Theorem \ref{thrm:mixing} and \(C_1\) is some constant that depends on \(d\).
	Note that this bound only depends on the size of \(Q_{K'}\) and as such is independent of the site the cube is centered around.
	
	Next, if \(D_k\) holds and the coupling succeeds, the number of informed particles at a given site \(y\) of the torus at time \(C_Tn+(k+1)t\) stochastically dominates a Poisson random variable of intensity \(\frac{1}{2C_MdC_{\ell}^{d}\log^{3d}(n)}\mu_y\). Since the probability that a particle is not informed during the interval \([C_Tn+kt,C_Tn+(k+1)t]\) is smaller than the probability of not getting the information only at the end of the interval, we have that \(\mathbb{P}[F_k\;|\;\{\textrm{coupling succeeds}\}\cap D_k]\) can be bound by the probability that at the end of the time interval, there are no informed particles at the location of the particle we are considering. Using (\ref{eq:mu_bounds_new}) to bound \(\mu_y\), we have for some constant \(C_2\) that \(\mathbb{P}[F_k\;|\;\{\textrm{coupling succeeds}\}\cap D_k]\) is at most the probability that a Poisson random variable with intensity \(\frac{C_2}{\log^{3d}(n)}\) is \(0\), i.e.
	\begin{equation}\label{eq:noParticle}
		\mathbb{P}[F_k\;|\;\{\textrm{coupling succeeds}\}\cap D_k]\leq\exp\left\{-\frac{C_2}{\log^{3d}(n)}\right\}.
	\end{equation}
	This bound holds uniformly across all sites of the torus where the particle might be located and across all configurations of particles for which \(D_k\) holds.
	Combining (\ref{eq:couplingFails}) and (\ref{eq:noParticle}) we therefore have for all \(k\in\{0,1,\dots,\lfloor C_{\beta}n/t\rfloor-1\}\) that 
	\begin{equation*}
		\mathbb{P}\left[F_{k}\;\Big|\;D_{k}\cap\bigcap_{j<k}F_{j}\cap D_{j}\right]
% 		&\leq\mathbb{P}[\textrm{coupling fails}\;|\;D_k]+\mathbb{P}[F_k\;|\;\{\textrm{coupling succeeds}\}\cap D_k]\\
		\leq n^d\exp\{-C_1\log^d(n)\}+\exp\left\{-\tfrac{C_2}{\log^{3d}(n)}\right\}.
	\end{equation*}
	
	Using the definition of \(t\), the bound from (\ref{eq:densityBound}) and applying the above bound for all \(k\in\{0,1,\dots,\lfloor C_{\beta}n/t\rfloor-1\}\), we have that \(\mathbb{P}[F_0\cap F_1\cap\dots\cap F_{\lfloor C_{\beta}n/t\rfloor-1}]\) from (\ref{eq:density_noDensity}) is smaller than
	%We now apply this for all \(k\in\{0,1,\dots,\lfloor C_{\beta}n/t\rfloor-1\}\) and add the bound from (\ref{eq:densityBound}), which gives us that \(\mathbb{P}[F_0\cap F_1\cap\dots\cap F_{\lfloor C_{\beta}n/t\rfloor-1}]\) from (\ref{eq:density_noDensity}) is smaller than
	\begin{equation*}
		\left(n^d\exp\{-C_1\log^d(n)\}\right)^{C_{\beta}n/(C_{\Theta}\log^8(n))}+\exp\left\{-\tfrac{C_2C_{\beta}n}{C_{\Theta}\log^{3d+8}(n)}\right\}+n^{-\omega(1)}.
% 		\\+n^{d+1}\log^{3d-3}(n)\exp\left\{-C\lambda_0\frac{\log^3(n)}{\log^c(\log^3(n))}\right\}.
	\end{equation*}
	
	Using that \(p_n\leq \mathbb{P}[F_0\cap F_1\cap\dots\cap F_{\lfloor C_{\beta}n/t\rfloor-1}]\) %, plugging this expression into (\ref{eq:mainBound}) and using that 
	and \(\mu_y
	\leq C_Md\) by (\ref{eq:mu_bounds_new}), we get that the probability that there exists a particle that has not been informed by time \((C_T+C_{\beta})n\) is at most
	\begin{equation*}
		C_Md\lambda_0n^d\bigg(\left(n^d\exp\{-C_1\log^d(n)\}\right)^{C_{\beta}n/(C_{\Theta}\log^8(n))}+\exp\left\{-\tfrac{C_2C_{\beta}n}{C_{\Theta}\log^{3d+8}(n)}\right\}+n^{-\omega(1)}\bigg).
	\end{equation*}
	Since the above is $n^{-\omega(1)}$, the proof is completed.
\end{proof}
%With Theorem \ref{thrm:mixing}, we have all the necessary tools to prove Theorem \ref{thrm:total}. We provide here an outline of the proof and relegate the full details to Section \ref{sect:proofs}. 
%
%We start by using Theorem \ref{prop:phase1density} to get that with high probability, after some time of order \(n\), we have for a duration proportional to \(n\) a density of informed particles everywhere on \(\mathbb{T}^d\).
%Intuitively, the Lipschitz net provides a ``highway system'' that the information uses to propagate across the torus. At the same time, due to Lemma \ref{prop:event}, the information does not abandon the parts of the torus that it informed along the way. This gives that after \(\mathcal{O}(n)\) time, the entire highway system is informed. Additionally, this highway system is ``dense'' across the entire torus so that every remaining healthy particle is close to some part of the highway system. 
%
%Next, we apply Theorem \ref{thrm:mixing}. This gives us a Poisson point process of informed particles on \(\mathbb{T}^d\) for a sequence of times that lie in a time interval of length \(\mathcal{O}(n)\).
%This will let us show that the probability of a particle anywhere on \(\textrm{T}^d\) successfully avoiding informed particles repeatedly across multiple times goes to zero as \(n\) becomes large.
%Combined with the first part of the proof, this gives the desired result. 

\section{Conclusion}
We have established a tight bound on the flooding time (up to constant factors) for the spread of information between random walk particles on the discrete torus of size \(n\), equipped with i.i.d., uniformly elliptic conductances. 
To prove this, we develop a framework to control dependences, which given any increasing, local event that is likely enough, 
one can find a Lipschitz surface and a Lipschitz net through space-time where this event holds. 
We believe this result can be applicable to analyze other processes and algorithms on systems of random walk particles. 
We also believe that this framework can be adapted to work with different types of particle systems, for example, when the particles do not move independently of one another, but nonetheless obey some 
local mixing. 

\appendix
\section{Standard large deviation results}
\begin{lemma}[Chernoff bound for Poisson]\label{lem:chernoff}
Let \(P\) be a Poisson random variable with mean \(\lambda\). Then, for any \(0<\epsilon<1\),
\[
	\mathbb{P}[P<(1-\epsilon)\lambda] < \exp\{-\lambda\epsilon^2/2\}
\quad \text{and} \quad
	\mathbb{P}[P > (1 + \epsilon)\lambda] < \exp\{-\lambda\epsilon^2/4\}.
\]
\end{lemma}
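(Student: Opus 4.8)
The plan is to use the standard Chernoff exponential-moment method, exploiting the explicit moment generating function of a Poisson variable, namely $\mathbb{E}[e^{tP}] = \exp\{\lambda(e^t-1)\}$ for all $t\in\mathbb{R}$.

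For the lower tail, I would fix $t>0$ and write $\mathbb{P}[P<(1-\epsilon)\lambda] = \mathbb{P}[e^{-tP} > e^{-t(1-\epsilon)\lambda}] \leq e^{t(1-\epsilon)\lambda}\,\mathbb{E}[e^{-tP}] = \exp\{\lambda((1-\epsilon)t + e^{-t} - 1)\}$ by Markov's inequality. Minimizing the exponent over $t>0$ gives the choice $e^{-t}=1-\epsilon$, i.e.\ $t=\log\frac{1}{1-\epsilon}>0$, which yields $\mathbb{P}[P<(1-\epsilon)\lambda] \leq \exp\{\lambda((1-\epsilon)\log(1-\epsilon) + \epsilon)\}$... more precisely $\exp\{-\lambda(-(1-\epsilon)\log(1-\epsilon) - \epsilon)\}$. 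For the upper tail I proceed symmetrically with $t>0$: $\mathbb{P}[P>(1+\epsilon)\lambda] \leq e^{-t(1+\epsilon)\lambda}\mathbb{E}[e^{tP}] = \exp\{\lambda(-(1+\epsilon)t + e^t - 1)\}$, and the optimal choice is $e^t=1+\epsilon$, i.e.\ $t=\log(1+\epsilon)>0$, giving $\mathbb{P}[P>(1+\epsilon)\lambda] \leq \exp\{-\lambda((1+\epsilon)\log(1+\epsilon) - \epsilon)\}$.

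It then remains to verify the two elementary inequalities $-(1-\epsilon)\log(1-\epsilon) - \epsilon \leq -\epsilon^2/2$ and $(1+\epsilon)\log(1+\epsilon) - \epsilon \geq \epsilon^2/4$ for $\epsilon\in(0,1)$. Each follows from a short calculus argument. For the first, set $g(\epsilon) = (1-\epsilon)\log(1-\epsilon) + \epsilon - \epsilon^2/2$; then $g(0)=0$ and $g'(\epsilon) = -\log(1-\epsilon) - \epsilon > 0$ on $(0,1)$ since $-\log(1-\epsilon) = \epsilon + \epsilon^2/2 + \cdots > \epsilon$, so $g>0$. For the second, set $h(\epsilon) = (1+\epsilon)\log(1+\epsilon) - \epsilon - \epsilon^2/4$; then $h(0)=h'(0)=0$ and $h''(\epsilon) = \frac{1}{1+\epsilon} - \frac12 > 0$ for $\epsilon\in(0,1)$, hence $h'>0$ and $h>0$ on $(0,1)$. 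Combining these bounds with the two displays above gives exactly $\mathbb{P}[P<(1-\epsilon)\lambda] < \exp\{-\lambda\epsilon^2/2\}$ and $\mathbb{P}[P>(1+\epsilon)\lambda] < \exp\{-\lambda\epsilon^2/4\}$.

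There is no real obstacle here; this is a classical computation. The only mildly delicate point is keeping the optimization and the auxiliary inequalities clean — in particular ensuring that the optimizing $t$ is indeed positive in both cases (which holds precisely because $0<\epsilon<1$ makes $1-\epsilon\in(0,1)$ and $1+\epsilon>1$) and that the constant $1/4$ rather than $1/2$ is forced in the upper tail by the requirement $h''\geq 0$ only on $(0,1)$.
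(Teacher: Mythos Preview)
Your approach is correct and entirely standard; the paper in fact states this lemma in the appendix as a known large-deviation bound and gives no proof at all, so there is nothing to compare against. One small slip: after optimizing in the lower-tail case you write the bound as $\exp\{-\lambda(-(1-\epsilon)\log(1-\epsilon)-\epsilon)\}$, but the exponent coming out of $\lambda[(1-\epsilon)t+e^{-t}-1]$ at $t=-\log(1-\epsilon)$ is $\lambda[-(1-\epsilon)\log(1-\epsilon)-\epsilon]$, i.e.\ the opposite sign. This is only a typo, since the inequality you then verify, $-(1-\epsilon)\log(1-\epsilon)-\epsilon\le -\epsilon^2/2$, is exactly what is needed for the correct expression, and your calculus checks for $g$ and $h$ are fine.
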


\addcontentsline{toc}{section}{References}
\bibliography{library}

\end{document}